\documentclass[11pt]{article}

\usepackage{amsmath}
\usepackage{amsfonts}
\usepackage{amssymb}
\usepackage{amscd}
\usepackage{latexsym}
\usepackage{mathrsfs}
\usepackage{amsthm}
\usepackage{comment}
\usepackage{empheq}
\usepackage{tikz}
\usepackage{pst-poly,multido}
\usepackage{subfigure}

\parindent=20pt
\textwidth 156 mm
\textheight 220 mm
\topmargin -10pt
\evensidemargin 10pt
\oddsidemargin 10pt
\frenchspacing

\usetikzlibrary{graphs}
\usetikzlibrary{decorations.pathreplacing,matrix,positioning}
\usetikzlibrary{calc}
\usetikzlibrary{intersections}

\newcommand\ie{{\em i.e.}}


\def\A{{A}}

\def\N{\mathbb{N}}
\def\T{\mathbb{T}}

\def\Z{\mathbb{Z}}
\def\C{\mathbb{C}}
\def\D{{\mathcal D}}

\def\F{\mathscr F}
\def\f{u}
\def\H{\mathcal H}
\def\HH{\mathscr H}
\def\K{\mathcal K}
\def\e{{\mathrm e}}

\def\R{\mathbb{R}}

\def\X{\mathcal X}

\def\LL{\mathfrak L}
\def\U{\mathscr U}

\def\KK{\mathfrak K}



\def\NNN{\mathfrak N}

\def\Id{\mathbb I}
\def\I{\mathscr I}

\def\B{\mathcal B}

\def\({\left(}
\def\[{\left[}
\def\){\right)}
\def\]{\right]}
\def\lp{\left\lVert}
\def\rp{\right\rVert}

\def\GG{\mathscr{G}}

\def\<{\langle}
\def\>{\rangle}

\def\Op{\mathfrak{Op}}

\def\XX{\mathfrak{X}}

\def\xx{\mathfrak{x}}

\def\ee{\mathfrak{e}}

\def\J{\mathscr{J}}

\def\d{\mathrm d}

\DeclareMathOperator{\Aut}{Aut}
\DeclareMathOperator{\Ran}{Ran}

\newtheorem{Theorem}{Theorem}[section]

\newtheorem{Lemma}[Theorem]{Lemma}

\newtheorem{Proposition}[Theorem]{Proposition}
\newtheorem{Definition}[Theorem]{Definition}
\newtheorem{Example}[Theorem]{Example}

\begin{document}

\title{Spectral and scattering theory for topological crystals \\
perturbed by infinitely many new edges}

\author{S. Richard\footnote{Supported by the grant\emph{Topological invariants
through scattering theory and noncommutative geometry} from Nagoya University,
and by JSPS Grant-in-Aid for scientific research C no 18K03328 \& 21K03292, and on
leave of absence from Univ.~Lyon, Universit\'e Claude Bernard Lyon 1, CNRS UMR 5208,
Institut Camille Jordan, 43 blvd.~du 11 novembre 1918, F-69622 Villeurbanne cedex,
France.}, N. Tsuzu}

\date{\small}
\maketitle \vspace{-18mm}

\begin{quote}
\emph{
\begin{itemize}
\item[] Graduate school of mathematics, Nagoya University,
Chikusa-ku,  Nagoya 464-8602, Japan
\item[] \emph{E-mail:} richard@math.nagoya-u.ac.jp, 
m20030u@math.nagoya-u.ac.jp
\end{itemize}
}
\end{quote}

\begin{abstract}
In this paper we investigate the spectral and scattering theory for operators 
acting on topological crystals and on their perturbations. 
A special attention is paid to perturbations obtained by the addition of an infinite number of  edges, and\,/\,or  by the removal of a finite number of them,
but perturbations of the underlying measures and perturbations by the addition of a multiplication operator are also considered. The description of the nature of the 
spectrum of the resulting operators, and the existence and completeness of the wave operators are standard outcomes for these investigations.
\end{abstract}

\textbf{2010 Mathematics Subject Classification:} 47A10, 81Q10
\smallskip

\textbf{Keywords:} Topological crystals, discrete Lapacian, spectral and scattering theory, perturbation theory, Mourre theory

\section{Introduction}\label{sec_intro}
\setcounter{equation}{0}

Consider a topological crystal, namely a perfect periodic discrete structure of arbitrary dimension, and let $H_0$ be a discrete Schr\"{o}dinger type operator acting on it.
Properties of such systems are well known, and the band structure of the spectrum
of $H_0$ has been studied for decades, see for example \cite[Sec.~XIII.16]{RS4}
for an introduction to the subject. Perturbations of such systems have also been 
extensively studied, often for a restricted family of graphs but also in the general
framework of topological crystals, see the list of references mentioned below.
Most of the time, the perturbations considered were either modifications of the weights
supported by the vertices or by the edges of the graph, or a perturbation
due to the addition of a potential decaying at infinity. Quite rarely, modifications 
of the graph itself were considered, and almost always structural perturbations
were confined in a bounded domain.

An extreme situation which has not been considered so far is about the addition of an infinite number of edges. If we think about a topological crystal as a perfectly ordered structure, with each vertex linked regularly to a very small number of neighbors, the addition of an infinite number of edges means a possible interaction
between vertices which are very far away from each others. Such systems can now 
describe long distance interactions, and can be used for modeling a much larger
family of weakly interacting physical systems. Clearly, the addition of an infinite number of edges can be performed only if suitable weights on them are imposed. The sum of the weights can not grow too much locally or at infinity. Similarly, it is possible to remove a few edges from the initial perfect lattice, but removing an infinite number of them is not possible: if we assume for a second that all vertices and initial edges have a weight $m=1$, then removing an infinite number of edges would produce an operator $H$ no more comparable with $H_0$, and perturbation theory would not apply anymore. 

Before describing more precisely the content of this paper, let us propose two examples on $\Z^d$ which provide an idea about typical conditions appearing when
an infinite number of edges are added, either connected to one vertex or to all vertices.
The Euclidean norm in $\Z^d$ is simply denoted by $|\!\cdot\!|$.

\begin{Example}[$0\in\Z^{d}$ connects to all other vertices]\label{ex_0_connect}
We consider the lattice $\Z^{d}$ and add infinitely many edges connecting $0$ to all other vertices, as shown in Figure \ref{pic_0_connect} for $d=1$. 
The set of added edges is denoted by $F$. 
We also fix $m(x)=1$ for any $x\in\Z^{d}$ and  
$m(\e)=1$ for all initial edges $\e$. For any $\e \in  F$, with 
endpoints $0$ and $y$, we assume that  $m(\e) \leq C | y |^{\alpha}$
for some $\alpha<-d-2$ and some constant $C$ independent of $y$. 

\begin{figure}[h]
\begin{center}
\tikzset{node/.style={circle,fill=black,inner sep=1pt}}
\begin{tikzpicture}
\coordinate (s) at (0, 0);
\coordinate[node, label=below:{$-4$}] (-4) at (-4, 0);
\coordinate[node, label=below:{$-3$}] (-3) at (-3, 0);
\coordinate[node, label=below:{$-2$}] (-2) at (-2, 0);
\coordinate[node, label=below:{$-1$}] (-1) at (-1, 0);
\coordinate[node, label=below:{$0$}] (0) at (0, 0);
\coordinate[node, label=below:{$1$}] (1) at (1, 0);
\coordinate[node, label=below:{$2$}] (2) at (2, 0);
\coordinate[node, label=below:{$3$}] (3) at (3, 0);
\coordinate[node, label=below:{$4$}] (4) at (4, 0);
\draw (-4.5, 0)--(4.5, 0);

\coordinate (s) at (0, 0); 
\draw ([shift={($(0.5, 0)+(s)$)}] 0:0.5) arc [x radius=0.5, y radius=0.2, start angle = 0, end angle=180];
\draw ([shift={($(-0.5, 0)+(s)$)}]0:0.5) arc [x radius=0.5, y radius=0.2, start angle = 0, end angle=180];
\draw ([shift={($(1, 0)+(s)$)}] 0:1) arc [x radius=1, y radius=0.4, start angle = 0, end angle=180];
\draw ([shift={($(-1, 0)+(s)$)}] 0:1) arc [x radius=1, y radius=0.4, start angle = 0, end angle=180];
\draw ([shift={($(1.5, 0)+(s)$)}] 0:1.5) arc [x radius=1.5, y radius=0.6, start angle = 0, end angle=180];
\draw ([shift={($(-1.5, 0)+(s)$)}] 0:1.5) arc [x radius=1.5, y radius=0.6, start angle = 0, end angle=180];
\draw ([shift={($(2, 0)+(s)$)}] 0:2) arc [x radius=2, y radius=0.8, start angle = 0, end angle=180];
\draw ([shift={($(-2, 0)+(s)$)}] 0:2) arc [x radius=2, y radius=0.8, start angle = 0, end angle=180];
\draw ([shift={($(2.5, 0)+(s)$)}] 0:2.5) arc [x radius=2.5, y radius=1, start angle = 0, end angle=180];
\draw ([shift={($(-2.5, 0)+(s)$)}] 0:2.5) arc [x radius=2.5, y radius=1, start angle = 0, end angle=180];
\end{tikzpicture}
\end{center}
\caption{$0\in \Z$ is connected to all other vertices}
\label{pic_0_connect}
\end{figure}
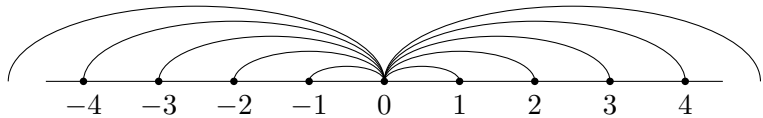
\end{Example}

\begin{Example}[All vertices of $\Z^{d}$ connected to all other vertices]\label{ex_all_connect}
We consider the lattice $\Z^{d}$, and add infinitely many edges connecting all vertices to each others.  
The set of added edges is denoted by $F$. 
We also fix $m(x)=1$ for any $x\in\Z^{d}$ and
$m(\e)=1$ for all initial edges $\e$. For any $\e \in  F$, with 
endpoints $x$ and $y$, we assume that  $m(\e) \leq C (1+|x|)^{\alpha}(1+|y|)^{\alpha}$
for some $\alpha<-d-2$ and some constant $C$ independent of $\e$.
\begin{figure}[h]
\begin{center}
\tikzset{node/.style={circle,fill=black,inner sep=1pt}}
\begin{tikzpicture}
\coordinate (s) at (0, 0);
\coordinate[node, label=below:{$-4$}] (-4) at (-4, 0);
\coordinate[node, label=below:{$-3$}] (-3) at (-3, 0);
\coordinate[node, label=below:{$-2$}] (-2) at (-2, 0);
\coordinate[node, label=below:{$-1$}] (-1) at (-1, 0);
\coordinate[node, label=below:{$0$}] (0) at (0, 0);
\coordinate[node, label=below:{$1$}] (1) at (1, 0);
\coordinate[node, label=below:{$2$}] (2) at (2, 0);
\coordinate[node, label=below:{$3$}] (3) at (3, 0);
\coordinate[node, label=below:{$4$}] (4) at (4, 0);
\draw (-4.5, 0)--(4.5, 0);

\coordinate (s) at (0, 0); 
\draw ([shift={($(0.5, 0)+(s)$)}] 0:0.5) arc [x radius=0.5, y radius=0.2, start angle = 0, end angle=180];
\draw ([shift={($(-0.5, 0)+(s)$)}]0:0.5) arc [x radius=0.5, y radius=0.2, start angle = 0, end angle=180];
\draw ([shift={($(1, 0)+(s)$)}] 0:1) arc [x radius=1, y radius=0.4, start angle = 0, end angle=180];
\draw ([shift={($(-1, 0)+(s)$)}] 0:1) arc [x radius=1, y radius=0.4, start angle = 0, end angle=180];
\draw ([shift={($(1.5, 0)+(s)$)}] 0:1.5) arc [x radius=1.5, y radius=0.6, start angle = 0, end angle=180];
\draw ([shift={($(-1.5, 0)+(s)$)}] 0:1.5) arc [x radius=1.5, y radius=0.6, start angle = 0, end angle=180];
\draw ([shift={($(2, 0)+(s)$)}] 0:2) arc [x radius=2, y radius=0.8, start angle = 0, end angle=180];
\draw ([shift={($(-2, 0)+(s)$)}] 0:2) arc [x radius=2, y radius=0.8, start angle = 0, end angle=180];
\draw ([shift={($(2.5, 0)+(s)$)}] 0:2.5) arc [x radius=2.5, y radius=1, start angle = 0, end angle=180];
\draw ([shift={($(-2.5, 0)+(s)$)}] 0:2.5) arc [x radius=2.5, y radius=1, start angle = 0, end angle=180];

\coordinate (s) at (1, 0); 
\draw ([shift={($(0.5, 0)+(s)$)}] 0:0.5) arc [x radius=0.5, y radius=0.2, start angle = 0, end angle=180];
\draw ([shift={($(-0.5, 0)+(s)$)}]0:0.5) arc [x radius=0.5, y radius=0.2, start angle = 0, end angle=180];
\draw ([shift={($(1, 0)+(s)$)}] 0:1) arc [x radius=1, y radius=0.4, start angle = 0, end angle=180];
\draw ([shift={($(-1, 0)+(s)$)}] 0:1) arc [x radius=1, y radius=0.4, start angle = 0, end angle=180];
\draw ([shift={($(1.5, 0)+(s)$)}] 0:1.5) arc [x radius=1.5, y radius=0.6, start angle = 0, end angle=180];
\draw ([shift={($(-1.5, 0)+(s)$)}] 0:1.5) arc [x radius=1.5, y radius=0.6, start angle = 0, end angle=180];
\draw ([shift={($(2, 0)+(s)$)}] 0:2) arc [x radius=2, y radius=0.8, start angle = 0, end angle=180];
\draw ([shift={($(-2, 0)+(s)$)}] 0:2) arc [x radius=2, y radius=0.8, start angle = 0, end angle=180];
\draw ([shift={($(-2.5, 0)+(s)$)}] 0:2.5) arc [x radius=2.5, y radius=1, start angle = 0, end angle=180];
\draw ([shift={($(-3, 0)+(s)$)}] 0:3) arc [x radius=3, y radius=1.2, start angle = 0, end angle=180];

\coordinate (s) at (2, 0); 
\draw ([shift={($(0.5, 0)+(s)$)}] 0:0.5) arc [x radius=0.5, y radius=0.2, start angle = 0, end angle=180];
\draw ([shift={($(-0.5, 0)+(s)$)}]0:0.5) arc [x radius=0.5, y radius=0.2, start angle = 0, end angle=180];
\draw ([shift={($(1, 0)+(s)$)}] 0:1) arc [x radius=1, y radius=0.4, start angle = 0, end angle=180];
\draw ([shift={($(-1, 0)+(s)$)}] 0:1) arc [x radius=1, y radius=0.4, start angle = 0, end angle=180];
\draw ([shift={($(1.5, 0)+(s)$)}] 0:1.5) arc [x radius=1.5, y radius=0.6, start angle = 0, end angle=180];
\draw ([shift={($(-1.5, 0)+(s)$)}] 0:1.5) arc [x radius=1.5, y radius=0.6, start angle = 0, end angle=180];
\draw ([shift={($(-2, 0)+(s)$)}] 0:2) arc [x radius=2, y radius=0.8, start angle = 0, end angle=180];
\draw ([shift={($(-2.5, 0)+(s)$)}] 0:2.5) arc [x radius=2.5, y radius=1, start angle = 0, end angle=180];
\draw ([shift={($(-3, 0)+(s)$)}] 0:3) arc [x radius=3, y radius=1.2, start angle = 0, end angle=180];
\draw ([shift={($(-3.5, 0)+(s)$)}] 0:3.5) arc [x radius=3.5, y radius=1.4, start angle = 0, end angle=180];

\coordinate (s) at (3, 0); 
\draw ([shift={($(0.5, 0)+(s)$)}] 0:0.5) arc [x radius=0.5, y radius=0.2, start angle = 0, end angle=180];
\draw ([shift={($(-0.5, 0)+(s)$)}]0:0.5) arc [x radius=0.5, y radius=0.2, start angle = 0, end angle=180];
\draw ([shift={($(1, 0)+(s)$)}] 0:1) arc [x radius=1, y radius=0.4, start angle = 0, end angle=180];
\draw ([shift={($(-1, 0)+(s)$)}] 0:1) arc [x radius=1, y radius=0.4, start angle = 0, end angle=180];
\draw ([shift={($(-1.5, 0)+(s)$)}] 0:1.5) arc [x radius=1.5, y radius=0.6, start angle = 0, end angle=180];
\draw ([shift={($(-2, 0)+(s)$)}] 0:2) arc [x radius=2, y radius=0.8, start angle = 0, end angle=180];
\draw ([shift={($(-2.5, 0)+(s)$)}] 0:2.5) arc [x radius=2.5, y radius=1, start angle = 0, end angle=180];
\draw ([shift={($(-3, 0)+(s)$)}] 0:3) arc [x radius=3, y radius=1.2, start angle = 0, end angle=180];
\draw ([shift={($(-3.5, 0)+(s)$)}] 0:3.5) arc [x radius=3.5, y radius=1.4, start angle = 0, end angle=180];
\draw ([shift={($(-4, 0)+(s)$)}] 0:4) arc [x radius=4, y radius=1.6, start angle = 0, end angle=180];

\coordinate (s) at (4, 0); 
\draw ([shift={($(0.5, 0)+(s)$)}] 0:0.5) arc [x radius=0.5, y radius=0.2, start angle = 0, end angle=180];
\draw ([shift={($(-0.5, 0)+(s)$)}]0:0.5) arc [x radius=0.5, y radius=0.2, start angle = 0, end angle=180];
\draw ([shift={($(-1, 0)+(s)$)}] 0:1) arc [x radius=1, y radius=0.4, start angle = 0, end angle=180];
\draw ([shift={($(-1.5, 0)+(s)$)}] 0:1.5) arc [x radius=1.5, y radius=0.6, start angle = 0, end angle=180];
\draw ([shift={($(-2, 0)+(s)$)}] 0:2) arc [x radius=2, y radius=0.8, start angle = 0, end angle=180];
\draw ([shift={($(-2.5, 0)+(s)$)}] 0:2.5) arc [x radius=2.5, y radius=1, start angle = 0, end angle=180];
\draw ([shift={($(-3, 0)+(s)$)}] 0:3) arc [x radius=3, y radius=1.2, start angle = 0, end angle=180];
\draw ([shift={($(-3.5, 0)+(s)$)}] 0:3.5) arc [x radius=3.5, y radius=1.4, start angle = 0, end angle=180];
\draw ([shift={($(-4, 0)+(s)$)}] 0:4) arc [x radius=4, y radius=1.6, start angle = 0, end angle=180];
\draw ([shift={($(-4.5, 0)+(s)$)}] 0:4.5) arc [x radius=4.5, y radius=1.8, start angle = 0, end angle=180];

\coordinate (s) at (-1, 0); 
\draw ([shift={($(0.5, 0)+(s)$)}] 0:0.5) arc [x radius=0.5, y radius=0.2, start angle = 0, end angle=180];
\draw ([shift={($(-0.5, 0)+(s)$)}]0:0.5) arc [x radius=0.5, y radius=0.2, start angle = 0, end angle=180];
\draw ([shift={($(1, 0)+(s)$)}] 0:1) arc [x radius=1, y radius=0.4, start angle = 0, end angle=180];
\draw ([shift={($(-1, 0)+(s)$)}] 0:1) arc [x radius=1, y radius=0.4, start angle = 0, end angle=180];
\draw ([shift={($(1.5, 0)+(s)$)}] 0:1.5) arc [x radius=1.5, y radius=0.6, start angle = 0, end angle=180];
\draw ([shift={($(-1.5, 0)+(s)$)}] 0:1.5) arc [x radius=1.5, y radius=0.6, start angle = 0, end angle=180];
\draw ([shift={($(2, 0)+(s)$)}] 0:2) arc [x radius=2, y radius=0.8, start angle = 0, end angle=180];
\draw ([shift={($(-2, 0)+(s)$)}] 0:2) arc [x radius=2, y radius=0.8, start angle = 0, end angle=180];
\draw ([shift={($(2.5, 0)+(s)$)}] 0:2.5) arc [x radius=2.5, y radius=1, start angle = 0, end angle=180];
\draw ([shift={($(3, 0)+(s)$)}] 0:3) arc [x radius=3, y radius=1.2, start angle = 0, end angle=180];

\coordinate (s) at (-2, 0); 
\draw ([shift={($(0.5, 0)+(s)$)}] 0:0.5) arc [x radius=0.5, y radius=0.2, start angle = 0, end angle=180];
\draw ([shift={($(-0.5, 0)+(s)$)}]0:0.5) arc [x radius=0.5, y radius=0.2, start angle = 0, end angle=180];
\draw ([shift={($(1, 0)+(s)$)}] 0:1) arc [x radius=1, y radius=0.4, start angle = 0, end angle=180];
\draw ([shift={($(-1, 0)+(s)$)}] 0:1) arc [x radius=1, y radius=0.4, start angle = 0, end angle=180];
\draw ([shift={($(1.5, 0)+(s)$)}] 0:1.5) arc [x radius=1.5, y radius=0.6, start angle = 0, end angle=180];
\draw ([shift={($(-1.5, 0)+(s)$)}] 0:1.5) arc [x radius=1.5, y radius=0.6, start angle = 0, end angle=180];
\draw ([shift={($(2, 0)+(s)$)}] 0:2) arc [x radius=2, y radius=0.8, start angle = 0, end angle=180];
\draw ([shift={($(2.5, 0)+(s)$)}] 0:2.5) arc [x radius=2.5, y radius=1, start angle = 0, end angle=180];
\draw ([shift={($(3, 0)+(s)$)}] 0:3) arc [x radius=3, y radius=1.2, start angle = 0, end angle=180];
\draw ([shift={($(3.5, 0)+(s)$)}] 0:3.5) arc [x radius=3.5, y radius=1.4, start angle = 0, end angle=180];

\coordinate (s) at (-3, 0); 
\draw ([shift={($(0.5, 0)+(s)$)}] 0:0.5) arc [x radius=0.5, y radius=0.2, start angle = 0, end angle=180];
\draw ([shift={($(-0.5, 0)+(s)$)}]0:0.5) arc [x radius=0.5, y radius=0.2, start angle = 0, end angle=180];
\draw ([shift={($(1, 0)+(s)$)}] 0:1) arc [x radius=1, y radius=0.4, start angle = 0, end angle=180];
\draw ([shift={($(-1, 0)+(s)$)}] 0:1) arc [x radius=1, y radius=0.4, start angle = 0, end angle=180];
\draw ([shift={($(1.5, 0)+(s)$)}] 0:1.5) arc [x radius=1.5, y radius=0.6, start angle = 0, end angle=180];
\draw ([shift={($(2, 0)+(s)$)}] 0:2) arc [x radius=2, y radius=0.8, start angle = 0, end angle=180];
\draw ([shift={($(2.5, 0)+(s)$)}] 0:2.5) arc [x radius=2.5, y radius=1, start angle = 0, end angle=180];
\draw ([shift={($(3, 0)+(s)$)}] 0:3) arc [x radius=3, y radius=1.2, start angle = 0, end angle=180];
\draw ([shift={($(3.5, 0)+(s)$)}] 0:3.5) arc [x radius=3.5, y radius=1.4, start angle = 0, end angle=180];
\draw ([shift={($(4, 0)+(s)$)}] 0:4) arc [x radius=4, y radius=1.6, start angle = 0, end angle=180];

\coordinate (s) at (-4, 0); 
\draw ([shift={($(0.5, 0)+(s)$)}] 0:0.5) arc [x radius=0.5, y radius=0.2, start angle = 0, end angle=180];
\draw ([shift={($(-0.5, 0)+(s)$)}]0:0.5) arc [x radius=0.5, y radius=0.2, start angle = 0, end angle=180];
\draw ([shift={($(1, 0)+(s)$)}] 0:1) arc [x radius=1, y radius=0.4, start angle = 0, end angle=180];
\draw ([shift={($(1.5, 0)+(s)$)}] 0:1.5) arc [x radius=1.5, y radius=0.6, start angle = 0, end angle=180];
\draw ([shift={($(2, 0)+(s)$)}] 0:2) arc [x radius=2, y radius=0.8, start angle = 0, end angle=180];
\draw ([shift={($(2.5, 0)+(s)$)}] 0:2.5) arc [x radius=2.5, y radius=1, start angle = 0, end angle=180];
\draw ([shift={($(3, 0)+(s)$)}] 0:3) arc [x radius=3, y radius=1.2, start angle = 0, end angle=180];
\draw ([shift={($(3.5, 0)+(s)$)}] 0:3.5) arc [x radius=3.5, y radius=1.4, start angle = 0, end angle=180];
\draw ([shift={($(4, 0)+(s)$)}] 0:4) arc [x radius=4, y radius=1.6, start angle = 0, end angle=180];
\draw ([shift={($(4.5, 0)+(s)$)}] 0:4.5) arc [x radius=4.5, y radius=1.8, start angle = 0, end angle=180];
\end{tikzpicture}
\end{center}
\caption{All vertices are connected to all other vertices, for $d=1$}
\label{pic_all_connect}
\end{figure}
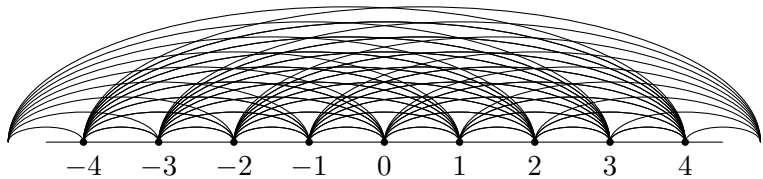
\end{Example}

Let us now be more precise about the content of this paper.
As already mentioned, we investigate the spectral theory of Schr\"{o}dinger operators on discrete graphs. Those graphs are obtained by perturbing an initial topological crystal, either at the level of edges (addition of an arbitrary number of them, removal of a finite number of them, change of weights), or at the level of the vertices (change of weights). 
Perturbation by the addition of a multiplication operator is also allowed.
These investigations on a self-adjoint operator $H$ are performed by comparing it
to an unperturbed self-adjoint operator $H_0$ acting on the initial topological crystal.
Powerful tools have been developed for investigating $H_0$, and our strategy 
is to adapt these tools for investigating $H$.
More precisely, if the difference between $H$ and $H_0$ is small and regular
in a suitable sense, then some properties of $H$ can be deduced from similar
properties of $H_0$.
In particular, by applying perturbative techniques, one can deduce the following results: 
the spectrum of $H$ consists of absolutely continuous spectrum, of a finite number (possibly empty) of eigenvalues of infinite multiplicity, and of eigenvalues of finite multiplicity which can accumulate only at a discrete set of thresholds.
In addition, one proves the existence and the completeness of the local
wave operators for the pair of operators $(H,H_0)$.
Note that these investigations generalize the results obtained in \cite{PR}
which were obtained in the framework of general topological crystals. 

As already mentioned, operators acting on graphs have been extensively studied. Among all corresponding papers, we list only those which are clearly linked to our investigations.
First of all, our main reference about topological crystals is the book \cite{Su12}.
On such discrete structures, it is well-known that periodic operators have a band structure with at most a finite number of eigenvalues of infinite multiplicity, see for example \cite{HN09,KS14, KS15}. 
The next step is to study what happens when these periodic Schr\"{o}dinger operators are perturbed. Two main types of perturbations can be considered.

The first one consists in adding a potential that decays at infinity as a short-range function
or as a long-range function. For general topological crystals, we refer to \cite{PR} and to the references mentioned therein. 
For specific graphs, these types of perturbations have also been studied in greater detail.
For example, the case of $\Z^{d}$ and graphene have been fully investigated in \cite{BS99, Ta19_2} and \cite{Ta19_1} respectively. 
As a related work, \cite{KM} provides estimates for the unitary group and the resolvent of the discrete Laplace operator on $\Z^{d}$, from which the authors infer some results for the spectral and the scattering theory of perturbed operators by potentials $V$ vanishing at infinity. 

The second type of perturbations is the modification of the graph itself. 
Perturbations corresponding to changing the weights of the graph have been investigated for example in \cite{GK18, PR}. 
For a perturbation transforming the graph structure, 
\cite{AIM15} studies spectral properties of Schr\"{o}dinger operators on perturbed periodic lattice including square, triangular, diamond, and kagome lattices, 
but the perturbations considered there are only compactly supported and some implicit conditions on the Floquet-Bloch variety are assumed.  
Note that some related results on the inverse scattering problem with compactly supported perturbations are available for some specific graphs in \cite{An13, AIM18, IK12, IM15}. 
For a non-compact perturbation, \cite{SS17} studies the stability of their essential spectrum. 
For Schr\"{o}dinger operators on periodic graphs perturbed by guides, graphs which are periodic in some directions and finite in other ones, we refer to \cite{KS17_1, KS17_2}. 

Let us now describe the sections of this paper.
In Section \ref{sec_main_result}, we describe the framework of our investigations and provide our main results. Note that topological crystals and more general graphs are thoroughly presented in this section.
The technical tools are introduced in Section \ref{section:other}.
This material is mainly borrowed from the paper \cite{PR}. In particular, 
we review the notion of analytically fibered operator, and recall that $H_0$ is unitarily equivalent to such an operator. Mourre theory and a suitable conjugate operator are also briefly introduced, and the Mourre estimate for $H_0$ is recalled. 
Note that another version of Mourre theory applied to discrete periodic operators has also 
been introduced in \cite{MRT07}. 
The proofs of all results, including the regularity of the difference between $H_{0}$ and $H$ are provided in Section \ref{sec_pert}. The examples mentioned in this introduction are also fully treated.  

As a final remark, let us emphasize one interest of the framework of topological crystals: The regularity of a graph, given by an action of $\Z^d$, is independent of the dimension on which the graph is naturally represented. In particular, these two dimensions can be different, as illustrated in the following example: The graph is naturally represented in $\R^3$ while the group acting is only $\Z$.

\begin{Example}[Toblerone \textregistered]\label{ex_Toblerone}
We consider the $1$ dimensional topological crystal in Figure \ref{original_toblerone}, 
and add infinitely many edges connecting one vertex to all other vertices. 
We denote by $x_{0}$ this special vertex. 
The set of added edges is denoted by $F$, which corresponds to the set of bold edges in Figure \ref{added_toblerone}. 
If we denote by $x_0$, $y_0$ and $z_0$ the three vertices of the triangle (section) 
containing $x_0$, then all other vertices can be naturally indexed by $x_\mu$, $y_\mu$ and $z_\mu$ for $\mu\in \Z$. 
We also fix $m(x)=1$ for any vertices $x$ and
$m(\e)=1$ for all initial edges $\e$. For any $\e \in  F$, with 
endpoints $x_{0}$ and $k_\mu$ with $k\in \{x,y,z\}$, we assume that  
$m(\e) \leq C (1+|\mu|)^{\alpha}$
for some $\alpha<-3$ and some constant $C$ independent of $\e$. 
\begin{figure}[h]
\begin{center}
\subfigure[Original Toblerone]{
\tikzset{node/.style={circle,fill=black,inner sep=1pt}}
\tikzset{node1/.style={circle,fill=black,inner sep=1.5pt}}
\begin{tikzpicture}[scale=0.75]
\coordinate[node] (a1) at (0,0);
\coordinate[node] (b1) at (-1.5,0.5);
\coordinate[node] (c1) at (-0.5, 2.5);

\coordinate[node] (a2) at (2, 0.25);
\coordinate[node] (b2) at (0.5, 0.75);
\coordinate[node] (c2) at (1.5, 2.75);

\coordinate[node] (a3) at (4, 0.5);
\coordinate[node] (b3) at (2.5, 1);
\coordinate[node] (c3) at (3.5, 3);

\coordinate[node] (a4) at (6, 0.75);
\coordinate[node] (b4) at (4.5, 1.25);
\coordinate[node] (c4) at (5.5, 3.25);

\draw[] (a1)--(b1);
\draw[] (b1)--(c1);
\draw[] (c1)--(a1);

\draw[] (a2)--(b2);
\draw[] (b2)--(c2);
\draw[] (c2)--(a2);

\draw[] (a3)--(b3);
\draw[] (b3)--(c3);
\draw[] (c3)--(a3);

\draw[] (a4)--(b4);
\draw[] (b4)--(c4);
\draw[] (c4)--(a4);

\coordinate (x0) at (-1, -0.125);
\coordinate (x1) at (7, 0.875);
\draw[] (x0)--(x1);

\coordinate (y0) at (-2.5, 0.375);
\coordinate (y1) at (5.5, 1.375);
\draw[] (y0)--(y1);

\coordinate (z0) at (-1.5, 2.375);
\coordinate (z1) at (6.5, 3.375);
\draw[] (z0)--(z1);
\end{tikzpicture}
\label{original_toblerone}
}
\hfill
\subfigure[Toblerone with added edges]{
\tikzset{node/.style={circle,fill=black,inner sep=1pt}}
\tikzset{node1/.style={circle,fill=black,inner sep=1.5pt}}
\begin{tikzpicture}[scale=0.75]
\coordinate[node] (a1) at (0,0);
\coordinate[node] (b1) at (-1.5,0.5);
\coordinate[node] (c1) at (-0.5, 2.5);

\coordinate[node1, label=below: {$x_{0}$}] (a2) at (2, 0.25);
\coordinate[node, label=above left: {$z_{0}$}] (b2) at (0.5, 0.75);
\coordinate[node, label=above: {$y_{0}$}] (c2) at (1.5, 2.75);

\coordinate[node] (a3) at (4, 0.5);
\coordinate[node] (b3) at (2.5, 1);
\coordinate[node] (c3) at (3.5, 3);

\coordinate[node] (a4) at (6, 0.75);
\coordinate[node] (b4) at (4.5, 1.25);
\coordinate[node] (c4) at (5.5, 3.25);

\draw[] (a1)--(b1);
\draw[] (b1)--(c1);
\draw[] (c1)--(a1);

\draw[] (a2)--(b2);
\draw[] (b2)--(c2);
\draw[] (c2)--(a2);

\draw[] (a3)--(b3);
\draw[] (b3)--(c3);
\draw[] (c3)--(a3);

\draw[] (a4)--(b4);
\draw[] (b4)--(c4);
\draw[] (c4)--(a4);

\coordinate (x0) at (-1, -0.125);
\coordinate (x1) at (7, 0.875);
\draw[] (x0)--(x1);

\coordinate (y0) at (-2.5, 0.375);
\coordinate (y1) at (5.5, 1.375);
\draw[] (y0)--(y1);

\coordinate (z0) at (-1.5, 2.375);
\coordinate (z1) at (6.5, 3.375);
\draw[] (z0)--(z1);

\draw[thick, rounded corners=17pt] (a2)--($(a2)!.75!(a3)!1!-90:(a3)$)--(a3);
\draw[thick, rounded corners=17pt] (a1)--($(a1)!.75!(a2)!1!-90:(a2)$)--(a2);
\draw[thick, rounded corners=30pt] (a2)--($(a2)!.7!(a4)!0.8!-90:(a4)$)--(a4);

\draw[thick, rounded corners=15pt] (a2)--($(a2)!.5!(b1)!.1!90:(b1)$)--(b1);
\draw[thick, rounded corners=10pt] (a2)--($(a2)!.75!(b2)!1!90:(b2)$)--(b2);
\draw[thick, rounded corners=10pt] (a2)--($(a2)!.25!(b3)!.3!-90:(b3)$)--(b3);
\draw[thick, rounded corners=20pt] (a2)--($(a2)!.5!(b4)!.2!-90:(b4)$)--(b4);

\draw[thick, rounded corners=30pt] (a2)--($(a2)!.55!(c1)!.3!90:(c1)$)--(c1);
\draw[thick, rounded corners=30pt] (a2)--($(a2)!.6!(c2)!1!-90:(c2)$)--(c2);
\draw[thick, rounded corners=35pt] (a2)--($(a2)!.55!(c3)!.8!-80:(c3)$)--(c3);
\draw[thick, rounded corners=40pt] (a2)--($(a2)!.6!(c4)!.5!-90:(c4)$)--(c4);
\end{tikzpicture}
\label{added_toblerone}
}
\end{center}
\caption{A $1$ dimensional topological crystal and its perturbation}
\label{toblerone}
\end{figure}
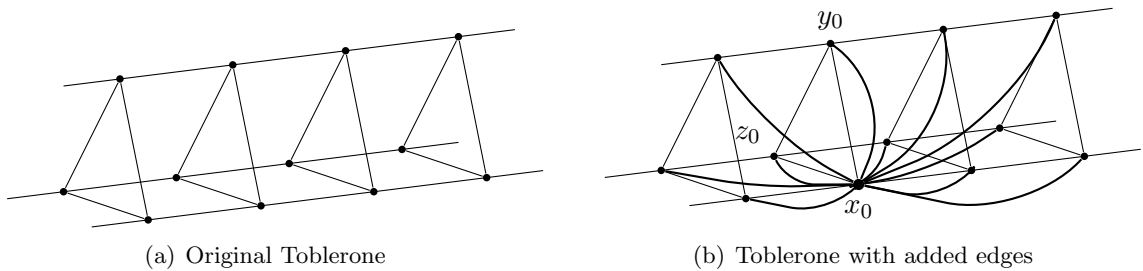
\end{Example}

\section{Framework and main result}\label{sec_main_result}
\setcounter{equation}{0}

In this section we first introduce the necessary information about general graphs and 
topological crystals, and then state our main result. Its proof will be provided in Section
\ref{sec_pert}.

\subsection{General graphs}\label{sec_gen}

A graph $X=\big(V(X),E(X)\big)$ is composed of a set $V(X)$ of vertices and a set 
$E(X)$ of unoriented edges. Multiple edges and loops are accepted.
Generically we shall use the notation $x,y$ for elements of $V(X)$, and $\e$ for elements of $E(X)$.
If both $V(X)$ and $E(X)$ are finite sets, the graph $X$ is said to be finite.

From the set of unoriented edges $E(X)$ of the graph $X$ we construct the set $\A(X)$ of oriented edges by defining, for any unoriented edge between $x$ and $y$, one oriented edge from $x$ to $y$ and one oriented edge from $y$ to $x$.
The elements of $\A(X)$ are also denoted by $\e$.
The origin vertex of such an oriented edge $\e$ is denoted by $o(\e)$, the terminal one by $t(\e)$,
and $\overline{\e}$ corresponds to the edge obtained from $\e$ by interchanging the vertices, \ie~$o(\overline{\e})=t(\e)$ and $t(\overline{\e})=o(\e)$.

For a vertex $x\in V(X)$ we set $E(X)_x:=\{\e\in E(X)\mid x \hbox{ is an endpoint of }\e\}$.
If $E(X)_x$ is finite for every $x\in V(X)$ we say that $X$ is locally finite.
Similarly, for $x\in V(X)$ we set $A(X)_x:=\{\e\in\A(X)\mid o(\e)=x\}$.
If there is no ambiguity about the graph, we shall simply write $E_x$ for $E(X)_x$ and
$A_x$ for $A(X)_x$.

By a measure $m$ on a graph $X$, we mean a function $m$ defined on vertices and on unoriented edges satisfying $m(x)>0$ and $m(\e)\geq 0$ for any $x\in V(X)$ 
and $\e\in E(X)$.
Note that measures in \cite{PR} were considered strictly positive, while here we assume
strict positivity on vertices, but allow the value $0$ on edges.
The measure on an oriented edge is defined by its value on the corresponding unoriented edge. As a consequence, the measure satisfies $m(\e)=m(\overline{\e})$.

Consider now the set 
$$
C_c(X):=\big\{f:V(X)\to \C\mid f(x)=0 \hbox{ except for a finite number of } x\in V(X)\big\},
$$
and define the degree function 
\begin{equation}\label{eq_def_deg}
\deg_{m}: V(X)\to[0,\infty), \quad \deg_{m} (x):=\sum_{\e\in\A_{x}}\frac{m(\e)}{m(x)}.
\end{equation}
If this function is bounded, then the Laplace operator given by
\begin{equation*}
\[\Delta(X,m)f\](x):=\sum_{\e\in\A_x}\frac{m(\e)}{m(x)}\big(f\big(t(\e)\big)-f(x)\big) \qquad  \forall f\in C_c(X),
\end{equation*}
extends continuously to a bounded and self-adjoint operator in the Hilbert space
\begin{equation*}
\ell^2(X,m):=\Big\{f:V(X)\to \C \mid \lp f\rp^2:=\sum_{x\in V(X)}\;\!m(x)|f(x)|^2<\infty\Big\}
\end{equation*}
endowed with the scalar product
\begin{equation*}
\langle f,g\rangle := \sum_{x\in V(X)}\;\!m(x)f(x)\;\!\overline{g(x)}\qquad \forall f,g\in \ell^2(X,m).
\end{equation*}
Note that the boundedness of $\Delta(X,m)$ has been proved in \cite[Thm.~2.4]{Keller}
and that the self-adjointness follows from a Green's formula, as proved in 
\cite[Lem.~4.7]{HK}. Let us also stress that these results do not assume local finiteness
of the graph, only the boundedness of the degree function $\deg_m$. This latter condition
will be assumed throughout the paper.
  
Let us finally consider a bounded function $R:V(X)\to \R$, and identify it with a multiplication operator in $\ell^2(X,m)$. Then, we end up with the following bounded and
self-adjoint operator $H$ which will be our main object of interest, namely
\begin{equation}\label{full}
H:=-\Delta (X,m)+ R.
\end{equation}
In fact, such a general operator will be considered later as a perturbation of 
a periodic operator on a topological crystal. 

\subsection{Topological crystals}\label{basic_topological_crystals}

In this section we provide the definition of a topological crystal
and define some related notions. Most of this material is directly borrowed from
\cite{PR} and \cite{R}.

A morphism $\omega: X\to \XX$ between two graphs $X$ and $\XX$ is composed of two maps $\omega:V(X)\to V(\XX)$
and $\omega:E(X)\to E(\XX)$ such that it preserves the adjacency relations between vertices and edges,
namely if $\e$ is an edge in $X$ between the vertices $x$ and $y$, then $\omega(\e)$ is an edge in $\XX$ between the vertices $\omega(x)$ and $\omega(y)$.
Clearly, any morphism can be extended to a map sending oriented edges of $\A(X)$
to oriented edges of $\A(\XX)$. For this extension we keep the convenient notation $\omega:\A(X)\to \A(\XX)$.
An isomorphism is a morphism that is a bijection on the vertices and on the edges.
The group of isomorphisms of a graph $X$ into itself is denoted by $\Aut(X)$.

A morphism $\omega:X\to\XX$ between two graphs is said to be a covering map if
\begin{enumerate}
\item[(i)] $\omega:V(X)\to V(\XX)$ is surjective,
\item[(ii)] for all $x\in V(X)$, the restriction $\omega|_{E(X)_x}:E(X)_x\to E(\XX)_{\omega(x)}$ is a bijection.
\end{enumerate}
In that case we say that $X$ is a covering graph over the base graph $\XX$.
For such a covering, we define the transformation group of the covering as the subgroup of $\Aut(X)$,
denoted by $\Gamma$ and with the multiplicative notation, 
such that for every $\mu\in\Gamma$ the equality $\omega\circ\mu=\omega$ holds.
We now define a topological crystal, and refer to \cite[Sec. 6.2]{Su12} for more details.

\begin{Definition}\label{topocrystal}
A $d$-dimensional topological crystal is a quadruplet $(X,\XX,\omega,\Gamma)$ such that:
\begin{enumerate}
\item[(i)] $X$ and $\XX$ are graphs, with $\XX$ finite,
\item[(ii)] $\omega:X\to \XX$ is a covering map,
\item[(iii)] The transformation group $\Gamma$ of $\omega$ is isomorphic to $\Z^d$,
\item[(iv)] $\omega$ is regular, \ie~for every $x$, $y\in V(X)$ satisfying $\omega(x)=\omega(y)$ there exists $\mu\in\Gamma$ such that $x=\mu y$.
\end{enumerate}
\end{Definition}

For simplicity, we assume that topological crystals have no multiple edges, and we shall just say that $X$ is a topological crystal if it admits a $d$-dimensional topological crystal structure $(X,\XX,\omega,\Gamma)$.
Note that we use the multiplicative notation for the group law in the abstract setting,
but the additive notation when dealing explicitly with $\Z^d$.
Note also that all topological crystal are locally finite, with an upper bound for the number of elements in $E(X)_x$ independent of $x$.
Indeed, the local finiteness and the fixed upper bound follow from the definition of a covering and the finiteness of $\XX$.

Topological crystals have been extensively studied in the monograph \cite{Su12} to which we refer for many examples.
Let us also mention \cite{AIM15} in which one can find square, triangular, hexagonal, and diamond periodic graphs.
In reference \cite{KS14} body-centered cubic and face-centered cubic periodic graphs have been studied,
while armchair graph is presented in \cite{BK10}.
We also refer to \cite[Rem.~3.1]{PR} for an explicit procedure generating an infinite number  of topological crystals
$(X,\XX,\omega,\Gamma)$ once a small graph $\XX$ has been chosen.

Let us add a few definitions related to a topological crystal $(X,\XX,\omega,\Gamma)$.
The notation $x$, resp.~$\xx$, will be used for the elements of $V(X)$, resp.~of $V(\XX)$,
and accordingly the notation $\e$, resp.~$\ee$, will be used for the elements of $E(X)$, resp.~of $E(\XX)$.
It follows from the assumptions in Definition \ref{topocrystal} that we can identify $V(\XX)$ as a subset of $V(X)$ by choosing a representative of each orbit.
Namely, since $V(\XX)=\{\xx_1,\dots,\xx_n\}$ for some $n\in\N$,
we choose $\{x_1,\dots,x_n\}\subset V(X)$ such that $\omega(x_j)=\xx_j$ for any $j\in \{1,\dots,n\}$.
For shortness we also use the notation $\check{x}:=\omega(x)\in V(\XX)$ for any $x\in V(X)$,
and reciprocally for any $\xx\in V(\XX)$ we write $\hat{\xx}\in \{x_1,\dots,x_n\}$ for the unique element $x_j$ in this set such that $\omega(x_j)=\xx$.

As a consequence of the previous construction we can also identify $\A(\XX)$ as a subset of $\A(X)$.
More precisely, we identify $\A(\XX)$ with $\cup_{j=1}^n \A_{x_j}\subset\A(X)$ and use notations similar to the previous ones:
For any $\e\in \A(X)$ one sets $\check{\e}:=\omega(\e)\in \A(\XX)$, and for any $\ee\in \A(\XX)$ one sets $\hat{\ee}\in \cup_{j=1}^n\A_{x_j}$
for the unique element in $\cup_{j=1}^n\A_{x_j}$ such that $\omega(\hat{\ee})=\ee$.
Let us stress that these identifications and notations depend only on the initial choice of $\{x_1,\dots,x_n\}\subset V(X)$.

We have now enough notation for defining the entire part of a vertex $x$ as the map $\lfloor\,\cdot\,\rfloor :V(X)\to\Gamma$ satisfying
\begin{equation*}
\lfloor x\rfloor \widehat{\check{x}}=x.
\end{equation*}
Similarly, the entire part of an edge is defined as the map $\lfloor\,\cdot\,\rfloor :\A(X)\to\Gamma$ satisfying
\begin{equation*}
\lfloor \e\rfloor \widehat{\check{\e}}=\e.
\end{equation*}
The existence of this function $\lfloor \, \cdot\,\rfloor $ follows from the assumption (iv)
of Definition \ref{topocrystal} on the regularity of a topological crystal.
One easy consequence of the previous construction is that the equality
$\lfloor \e\rfloor =\lfloor o(\e)\rfloor $ holds for any $\e\in \A(X)$.

For later use, let us also define the map
\begin{equation*}
\eta:\A(X)\to\Gamma, \quad \eta(\e):=\lfloor t(\e)\rfloor \lfloor o(\e)\rfloor ^{-1}
\end{equation*}
and call $\eta(\e)$ the index of the edge $\e$. For any $\mu\in\Gamma$ we then infer that
\begin{equation*}
\eta(\mu\e)=\lfloor t(\mu\e)\rfloor \lfloor o(\mu\e)\rfloor ^{-1}=\mu\lfloor t(\e)\rfloor \mu^{-1}\lfloor o(\e)\rfloor ^{-1}=\eta(\e).
\end{equation*}
This periodicity enables us to define unambiguously $\eta:\A(\XX)\to \Gamma$ by the relation $\eta(\ee):=\eta(\hat{\ee})$
for every $\ee\in\A(\XX)$. Again, this index on $\A(\XX)$ depends only on the initial choice $\{x_1,\dots,x_n\}\subset V(X)$
and could not be defined by considering only $\A(\XX)$.

Let us now come back to operators acting on a topological crystal. 
In this setting, we consider a $\Gamma$-periodic measure $m_0$ and a $\Gamma$-periodic function $R_0: V(X)\to \R$.
The periodicity means that for every $\mu\in\Gamma$, $x\in V(X)$ and $\e \in E(X)$ we have $m_0(\mu x)=m_0(x)$, $m_0(\mu \e) = m_0(\e)$ and $R_0(\mu x)=R_0(x)$.
In this framework, the degree function $\deg_{m_0}$ introduced in \eqref{eq_def_deg}
is clearly bounded.
Therefore, a periodic Schr\"odinger operator defined by
\begin{equation}\label{h0vertices}
H_0:=-\Delta (X,m_0)+R_0
\end{equation}
is a bounded and self-adjoint operator in the Hilbert space $\ell^2(X,m_0)$.
Such an operator corresponds to our unperturbed system.

\subsection{Perturbations of topological crystals}

In this section we introduce perturbations of topological crystals by adding
and/or removing edges. 
The framework is the following:
Let $(X,\XX,\omega,\Gamma)$ be a topological crystal, with $X=\big(V(X),E(X)\big)$, 
and let $m_0$ be a $\Gamma$-periodic measure on $X$. 
We now consider the addition and/or the elimination of edges.
For the addition, let $F_+$ be a possibly infinite set of unoriented new edges between arbitrary vertices of $X$. 
For the elimination, we consider a finite subset $F_-$ of edges of $E(X)$. 
Without loss of generality, we assume that $F_+$ and $F_-$ do not contain multiple edges.
We then obtain a new graph $\X=\big(V(\X),E(\X)\big)$ given by
$V(\X):=V(X)$ and $E(\X):= \big(E(X)\setminus F_-\big)\cup F_+$.
In general this graph is no more a topological crystal.
For this new graph, the set of oriented edges is denoted by $A(\X)$.
Let us also define $A(F_+)$ and $A(F_-)$: the first set corresponds to the sets of oriented edges based on $F_+$, with $A(F_+)\subset A(\X)$, while $A(F_-)\subset A(X)$ is the set of oriented edges based on $F_-$. Note that $A(F_-)$ is not included in $A(\X)$ in general.

We then consider a measure $m$ on $\X$ with $\deg_m$ bounded. The corresponding Laplace operator $\Delta(\X, m)$ is then self-adjoint and bounded in $\ell^2(\X,m)$.
Subsequently, the measure $m$ restricted to the edges in $E(X)\setminus F_-$ will correspond to a perturbation of $m_0$, 
and similarly the measure $m$ on the vertices in $V(X)$ will be a perturbation of the measure $m_0$.
It will also be useful to introduce a partial degree function, namely
\begin{equation}\label{def_deg_F}
\deg_{F_+}:V(\X)\to [0,\infty), \quad \deg_{F_+}(x):= \sum_{\e\in A(F_+)_{x}}  \frac{m(\e)}{m(o(\e))}.
\end{equation}
Clearly, $\deg_{F_+}\leq \deg_m$, and this function is bounded since the function $\deg_m$ is assumed to be bounded.

Before stating our main result, 
let us still mention that the isomorphism between $\Gamma$ and $\Z^d$ allows us to borrow the Euclidean norm $|\cdot|$ of $\Z^d$ and to
endow $\Gamma$ with it. As a consequence of this construction, the notations $|\lfloor x\rfloor |$ and $|\lfloor \e\rfloor |$ are well-defined,
and the notion of rate of convergence towards infinity is available.
Also, since no vertex has been added or eliminated between the original graph and the perturbed one, we can introduce a unitary transformation $\J : \ell^{2}(X, m_0)\to\ell^{2}(\X, m)$ given by  
\begin{equation}\label{eq_map_J}
[\J f](x) := \left( \frac{m_0(x)}{m(x)} \right)^{\frac{1}{2}}f(x), \quad f\in\ell^{2}(X, m_0). 
\end{equation}
Note that in \cite{PR}, the notation $\J$ was used for our map $\J^*$. 

\begin{Theorem}\label{thm_main}
Let $X$ be a topological crystal, endowed with a $\Gamma$-periodic measure $m_0$ 
and a $\Gamma$-periodic function $R_0$.
Let $F_+$ be a possibly infinite set of unoriented new edges, let $F_{-}$ be a finite subset of $E(X)$, and consider the graph $\X=\big(V(\X),E(\X)\big)$ given by
$V(\X):=V(X)$ and $E(\X):= \big(E(X)\setminus F_-\big)\cup F_+$.
Consider a measure $m$ on $\X$ with $\deg_m$ bounded, and assume that $m$ satisfies 
\begin{enumerate}
\item[(i)] Decay of perturbation on pre-existing edges:
\begin{equation}\label{eq_Cm1}
\int_{1}^{\infty}\d\lambda \sup_{\substack{\e\in E(X)\setminus F_- \\ \lambda<|\lfloor \e\rfloor |<2\lambda}}\left|\frac{m(\e)}{m(o(\e))}-\frac{m_0(\e)}{m_0(o(\e))}\right|<\infty,
\end{equation}
\item[(ii)] Decay of degree function on new edges:
\begin{equation}\label{eq_Cm2}
\int_{1}^{\infty} \d \lambda \sup_{\lambda < | \lfloor x \rfloor| < 2\lambda} 
\deg_{F_+}(x) < \infty,
\end{equation}
\item[(iii)] Decay of global new connectivity:
\begin{equation}\label{eq_Cm3}
\int_{1}^{\infty} \d \lambda \sup_{ x\in V(\X) } 
\sqrt{\sum_{\substack{\e \in A(F_+)_{x} \\ \lambda\leq | \lfloor t(\e) \rfloor| \leq2 \lambda}}  \frac{m(\e)}{m(o(\e))}}  < \infty.
\end{equation}
\end{enumerate}
Consider also $R:V(\X)\to \R$ satisfying the decay condition
\begin{equation}\label{eq_Cm4}
\int_{1}^{\infty}\d\lambda \sup_{\lambda<|\lfloor x\rfloor |<2\lambda}\left|R(x)-R_0(x)\right|<\infty.
\end{equation}
Let finally $H_0$ and $H$ be the self-adoint operators defined by \eqref{h0vertices} and \eqref{full} respectively.
Then, there exists a discrete set $\tau\subset\R$ such that for every closed interval $I\subset\R\setminus\tau$ the following assertions hold:
\begin{enumerate}
\item $H_0$ has no eigenvalue in $I$, and $H$ has at most a finite number of eigenvalues in $I$, each of them being of finite multiplicity,
\item $\sigma_{sc}(H_0)\cap I = \sigma_{sc}(H)\cap I=\emptyset$.
\end{enumerate}
If the following additional condition also holds for some $s>1/2$~:
\begin{equation}\label{eq_Cm5}
\sup_{x\in V(\X)}\langle \lfloor x \rfloor \rangle^{2s}
\sum_{\e \in A(F_+)_{x}} \frac{m(\e)}{m(x)} \langle \lfloor t(\e) \rfloor  \rangle^{2s}<\infty,
\end{equation}
then the local wave operators
$$
W\pm \equiv W_{\pm}(H,H_0;\J,I):=s-\lim_{t\to\pm\infty}e^{iH t}\J e^{-iH_0 t}E^{H_0}(I)
$$
exist and satisfy $\Ran (W_-)=\Ran( W_+)=E^{H}_{ac}(I)\ell^2(\X,m)$.
\end{Theorem}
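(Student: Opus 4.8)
The plan is to transport the Mourre estimate already established for $H_0$ to the perturbed operator, using the conjugate operator $A$ attached to the analytically fibered form of $H_0$. Since $H$ and $H_0$ act on different Hilbert spaces, I first bring everything back to $\ell^2(X,m_0)$ through the unitary $\J$ by setting $\widetilde H:=\J^*H\J$; this operator is self-adjoint on $\ell^2(X,m_0)$, unitarily equivalent to $H$, and therefore carries exactly the same spectral and scattering information. The discrete set $\tau$ will be the set of thresholds of $H_0$ coming from its fibered structure, i.e.\ the critical values of the band functions over the torus, since it is precisely for closed intervals $I\subset\R\setminus\tau$ that the strict Mourre estimate $E^{H_0}(I)[iH_0,A]E^{H_0}(I)\geq c\,E^{H_0}(I)$ holds with some $c>0$ and that $H_0$ has no eigenvalue.

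Next I would decompose the perturbation $B:=\widetilde H-H_0$ into four contributions: the change of measure on the vertices (a multiplication operator built from the ratios $m_0/m$ occurring in $\J$), the change of weights on the pre-existing edges together with the removal of the finite set $F_-$, the hopping operator $L_{F_+}$ associated with the new edges $F_+$, and the potential difference $R-R_0$. The removed edges give a finite-rank term. For the rest the goal is compactness: conditions (i) and (iv) force the vertex, pre-existing-edge and potential coefficients to vanish as $|\lfloor x\rfloor|\to\infty$; condition (ii) does the same for the diagonal part $\deg_{F_+}$ of $L_{F_+}$; and the genuinely non-local off-diagonal part of $L_{F_+}$ is shown to be compact by a Schur-type estimate combining the decay in the origin vertex from (ii) with the decay in the terminal vertex from (iii). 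Hence $B$ is compact, so Weyl's theorem gives $\sigma_{\mathrm{ess}}(\widetilde H)=\sigma_{\mathrm{ess}}(H_0)$, and the Mourre estimate for $H_0$ transfers to $\widetilde H$ on every $I\subset\R\setminus\tau$ up to a compact remainder.

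The dyadic-integral shape of the four hypotheses is exactly the short-range (Besov) condition producing $C^{1,1}(A)$ regularity. The operator $H_0$ is already regular with respect to $A$, and for each piece of $B$ the commutators with $A$ reduce, in the fibered picture, to finite differences of the coefficients along the position $\lfloor\cdot\rfloor$; the integrals in (i)--(iv) bound precisely the quantities controlling the $C^{1,1}(A)$ norm, where for $L_{F_+}$ one uses in addition that commuting with $A$ brings in the index $\eta$, i.e.\ the shift $\lfloor t(\e)\rfloor-\lfloor o(\e)\rfloor$, again controlled by (ii)--(iii). With $\widetilde H\in C^{1,1}(A)$ and the Mourre estimate on $I$, the abstract Mourre theorem applies and yields at most finitely many eigenvalues of finite multiplicity in $I$, the absence of singular continuous spectrum in $I$, and the limiting absorption principle in the weighted spaces $\langle\lfloor\cdot\rfloor\rangle^{-s}$ for $s>1/2$. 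Transporting these statements through the unitary $\J$ proves assertions 1 and 2 for $H$.

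For the wave operators I would use the two-Hilbert-space smooth-perturbation scheme. The extra condition (v) is exactly a weighted Schur bound ensuring that $\langle\lfloor\cdot\rfloor\rangle^{s}L_{F_+}\langle\lfloor\cdot\rfloor\rangle^{s}$ is bounded for some $s>1/2$; combined with (i) and (iv) it lets me write $B=\langle\lfloor\cdot\rfloor\rangle^{-s}B_0\langle\lfloor\cdot\rfloor\rangle^{-s}$ with $B_0$ bounded, so each weight factor is both $H_0$-smooth and $\widetilde H$-smooth on $I$ by the two limiting absorption principles just obtained. Since $\frac{\d}{\d t}\big(e^{iHt}\J e^{-iH_0t}\big)=e^{iHt}\,i\J B\,e^{-iH_0t}$, Kato's smooth-perturbation criterion gives the existence of $W_\pm$, and running the same argument with the roles of $H$ and $\widetilde H$ exchanged, together with the absence of singular continuous spectrum, gives completeness, that is $\Ran(W_-)=\Ran(W_+)=E^H_{ac}(I)\ell^2(\X,m)$. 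The delicate point throughout is the non-local term $L_{F_+}$: unlike classical local perturbations it links arbitrarily distant vertices, so its compactness, its $C^{1,1}(A)$ regularity and its Kato smoothness all hinge on the interplay between $A$ and the long edges, and checking that conditions (ii), (iii) and (v) deliver the required commutator and Schur bounds is where the real work lies.
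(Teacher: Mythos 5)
Your architecture for assertions 1 and 2 matches the paper's: conjugate $H$ back to $\ell^2(X,m_0)$ via $\J$, decompose $\widetilde H-H_0$ into the measure, pre-existing-edge, new-edge and potential pieces, prove compactness of each piece and $C^{1,1}(A)$ regularity from the dyadic integrals, and conclude with the abstract Mourre theorem together with the stability of $\tilde\mu^{A}$ under compact $C^1(A)$ perturbations. The compactness of the non-local part of $L_{F_+}$ via a Schur bound with vanishing dyadic tails is also essentially what the paper does.

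The gap is in your wave-operator step. You claim that conditions (i), (ii), (iv), combined with (v), let you factorize the \emph{whole} perturbation as $B=\langle\lfloor\cdot\rfloor\rangle^{-s}B_0\langle\lfloor\cdot\rfloor\rangle^{-s}$ with $B_0$ bounded and $s>1/2$, so that Kato's local smoothness criterion applies. This fails: the hypotheses (i), (ii), (iv) are of Besov (dyadic) type, $\int_1^\infty \d\lambda\,\sup_{\lambda<|\lfloor x\rfloor|<2\lambda}|f(x)|<\infty$, and such a condition does \emph{not} imply $\sup_x\langle\lfloor x\rfloor\rangle^{2s}|f(x)|<\infty$ for any $s>1/2$. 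For instance, taking $f$ equal to $2^{-j}j^{-2}$ on the dyadic shell $2^j\le|\lfloor x\rfloor|<2^{j+1}$ makes the dyadic integral converge while $\langle\lfloor x\rfloor\rangle^{2s}f(x)\sim 2^{(2s-1)j}j^{-2}$ is unbounded. Only the genuinely non-local first term of $L_+$ is given the polynomial-weight hypothesis (v); the remaining terms satisfy nothing stronger than the dyadic condition, so your factorization is unavailable for them. The paper circumvents this by using the interpolation-space version of the short-range criterion, \cite[Thm.~7.4.3]{ABG96}: it suffices that the perturbation belong to $\B(\KK^{*\circ},\KK)$ with $\KK=\big(\D(A_I),L^2(\T^d;\C^n)\big)_{\frac12,1}$, and membership in this class for the local terms is exactly what the dyadic conditions deliver (via \cite[Lem.~6.3]{PR}, after replacing $\D(A_I)$ by $\D(\Lambda)$), while the non-local term is placed in $\B(\GG^*,\GG)\subset\B(\KK^{*\circ},\KK)$ with $\GG=\D(\langle N\rangle^s)$ using (v). To repair your argument you would either have to strengthen (i), (ii), (iv) to pointwise polynomial decay of order $2s>1$, or switch to the Besov-space formulation as the paper does.
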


The hypothesis \eqref{eq_Cm1} and \eqref{eq_Cm4} are usually referred to as a short-range type of decay.
In particular it is satisfied for functions that decay faster than $C(1+|\lfloor x\rfloor |)^{-1-\epsilon}$ for $\epsilon>0$ and some constant $C$ independent of $x$.
It is worth mentioning that condition \eqref{eq_Cm1} is quite general and is automatically satisfied if the difference $m-m_0$ itself satisfies a
short-range type of decay. For example if we assume that $|m(\e)-m_0(\e)|\leq C (1+|\lfloor \e\rfloor |)^{-1-\epsilon}$ and
$|m(x)-m_0(x)|\leq C' (1+|\lfloor x\rfloor |)^{-1-\epsilon}$, then \eqref{eq_Cm1} is satisfied.
Note also that the conditions \eqref{eq_Cm2} and \eqref{eq_Cm3} are trivially satisfied if $F_+$ is 
a finite set. On the other hand, if $F_+$ is infinite, the two conditions prescribe 
precisely the necessary decay of the measure on the new edges.
In this case, the conditions allow the addition of an infinite number of edges, both locally and at infinity.
Note finally that the additional condition \eqref{eq_Cm5} is necessary because of 
the non-locality of our perturbations: when dealing with multiplicative perturbations, 
this condition often follows from \eqref{eq_Cm2}.

\section{Analyticity and Mourre theory}\label{section:other}
\setcounter{equation}{0}

In this section, we briefly recall a few tools introduced in \cite{PR} for the study of 
topological crystals and their perturbations, and refer to this reference for the details.
The framework is a topological crystal  $(X,\XX,\omega,\Gamma)$, 
a $\Gamma$-periodic measure $m_0$ and a $\Gamma$-periodic function $R_0$.

\subsection{Analyticity of the periodic operator}

The aim of this section is to introduce another representation of the operator $H_0$,
more suitable for further investigations.
More precisely, we shall obtain that $H_0$ is unitarily equivalent to an analytically fibered operator.
We provide below the simplest definition of such an operator, and refer to \cite{GN98} and \cite[Sec.~XIII.16]{RS4} for general information.
Note that from now on we shall use the notation $\T^d$  for the $d$-dimensional torus, \ie~for $\T^d=\R^{d}/\Z^{d}$,
with the inherited local coordinates system and differential structure.
We shall also use the notation $M_n(\C)$ for the $n\times n$ matrices over~$\C$.

\begin{Definition}\label{analiticallyfibered}
In the Hilbert space $L^2(\T^d;\C^n)$, a bounded analytically fibered operator
corresponds to a multiplication operator defined by a real analytic map $h :\T^d\to M_n(\C)$.
\end{Definition}

In order to show that the periodic operator $H_0$ fits into this framework,
some identifications are necessary.
First of all, since $\Gamma$ is isomorphic to $\Z^d$, as stated in the point (iii) of Definition \ref{topocrystal},
we know that its dual group $\hat{\Gamma}$ is isomorphic to $\T^d$. In fact, we consider that a basis of $\Gamma$ is chosen
and then identify $\Gamma$ with $\Z^{d}$, and accordingly $\hat{\Gamma}$ with $\T^d$.
As a consequence of these identifications we set $\xi\cdot\mu = \sum_{j=1}^d\xi_j\mu_j$ for $\xi \in \T^d$ and $\mu\in \Z^d$, and define the Fourier transform for any $f\in \ell^1(\Z^d)$ by 
\begin{equation}\label{def_Fourier}
[\F f](\xi)=\sum_{\mu\in \Z^d} e^{-2\pi i\,\xi\cdot\mu} f(\mu).
\end{equation}
Its inverse is given by 
$$
[\F^* u](\mu)=\int_{\T^d}\d\xi\;\! e^{2\pi i\,\xi\cdot \mu}u(\xi),
$$ 
with $\d \xi$ the usual measure on  $\T^d$.
Note that another consequence of these identifications is the use of the additive notation for the composition of two elements of $\Z^d$,
instead of the multiplicative notation employed until now for the composition in $\Gamma$.

The second necessary identification is between $\ell^2(\XX)$ and $\C^n$. 
Observe firstly that because of the periodicity of the measure $m_0$, 
this measure is also well-defined on $\XX$
by the relation $m_0(\xx):=m_0(\hat{\xx})$ and $m_0(\ee):=m_0(\hat{\ee})$.
For simplicity, we keep the same notation for this measure on $\XX$.
Then, since $V(\XX)= \{\xx_1,\dots,\xx_n\}$, the vector space 
$\ell^2(\XX)\equiv \ell^2(\XX,m_0) $ is of dimension $n$. 
However, since the scalar product in $\ell^2(\XX)$ is defined with the measure 
$m_0$ while $\C^n$ is endowed with the standard scalar product, one unitary transformation has to be defined. More precisely,
one sets $\I:\ell^2(\XX)\to \C^n$ acting on any $\varphi\in \ell^2(\XX)$ as
\begin{equation}\label{def_de_I}
\I \varphi :=\big(m_0(\xx_1)^{\frac12}\varphi(\xx_1), m_0(\xx_2)^{\frac12}\varphi(\xx_2),\dots,m_0(\xx_n)^{\frac12}\varphi(\xx_n)\big).
\end{equation}
This map defines clearly a unitary transformation between $\ell^2(\XX)$ and $\C^n$.

Let us now consider the Hilbert spaces $\ell^2(X,m_0)$ and $L^2\big(\T^d; \ell^2(\XX)\big)$.
We define the map $\U: C_c(X) \to L^2\big(\T^d; \ell^2(\XX)\big)$ for $f\in C_c(X)$,
$\xi\in \T^d$, and $\xx\in V(\XX)$ by
\begin{equation}\label{def_de_U}
[\U f](\xi,\xx):=\sum_{\mu\in\Gamma}e^{-2\pi i\,\xi\cdot\mu}f(\mu\hat{\xx}).
\end{equation}
Clearly, the map $\U$ corresponds the composition of two maps: the identification of $\ell^2(X,m_0)$ with $\ell^2\big(\Z^d;\ell^2(\XX)\big)$
and the Fourier transform introduced in \eqref{def_Fourier}.
As a consequence, $\U$ extends to a unitary map from $\ell^2(X,m_0)$ to $L^2\big(\T^d;\ell^2(\XX)\big)$, and we shall
keep the same notation for this continuous extension.
The formula for its adjoint is then given on any $\f\in L^1\big(\T^d;\ell^2(\XX)\big)$ by
\begin{equation*}
[\U^*\f](x) = \int_{\T^d}\d\xi\;\!e^{2\pi i  \xi \cdot\lfloor x\rfloor}\f(\xi,\check{x}).
\end{equation*}
Finally, the composed map $\I\U$ provides a unitary map between $\ell^2(X,m_0)$ and $L^2(\T^d;\C^n)$.

We can now state the main result of this section, and refer to \cite[Prop.~4.7]{PR}
for its proof. Note that we use the common notation $\delta_{j\ell}$ for the Kronecker delta function, and that the index map $\eta$ has been introduced in Section 
\ref{basic_topological_crystals}.

\begin{Proposition}\label{prop_def_H}
Let  $(X,\XX,\omega,\Gamma)$ be a topological crystal and let $m_0$ be a $\Gamma$-periodic measure on $X$.
Let $R_0$ be a real $\Gamma$-periodic function defined on $V(X)$.
Then the periodic Schr\"odinger operator $H_0:=-\Delta(X,m_0)+R_0$
is unitarily equivalent to a bounded analytically fibered operator in $L^2(\T^d;\C^n)$, 
namely $\I \U H_0 \U^*\I^*$ is equal to the operator
defined by the function $h_0:\T^d\to M_n(\C)$ with
\begin{equation*}
h_0(\xi)_{j\ell}:=-\sum_{\ee=(\xx_j,\xx_\ell)}\frac{m_0(\ee)}{m_0(\xx_j)^{\frac{1}{2}}\;\!m_0(\xx_\ell)^{\frac{1}{2}}}
\;\! e^{2\pi i\,\xi\cdot\eta(\ee)} + \big(\deg_{m_0}(\xx_j) + R_0(\xx_j)\big)\delta_{j\ell}
\end{equation*}
for any $\xi \in \T^d$ and $j,\ell\in \{1,\dots,n\}$.
\end{Proposition}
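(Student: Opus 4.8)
The plan is to compute the conjugated operator $\I\U H_0\U^*\I^*$ explicitly on the dense subspace $C_c(X)$ and to read off that it acts as multiplication by the stated matrix-valued symbol $h_0$. First I would decompose $H_0=-\Delta(X,m_0)+R_0$ into a \emph{hopping} part $f\mapsto -\sum_{\e\in A_x}\tfrac{m_0(\e)}{m_0(x)}f(t(\e))$ and a \emph{diagonal} part $f\mapsto\big(\deg_{m_0}(x)+R_0(x)\big)f(x)$, the latter obtained by collecting the $+f(x)$ terms of the Laplacian (which sum to $\deg_{m_0}(x)f(x)$) with the potential. Since $m_0$ and $R_0$ are $\Gamma$-periodic, both parts commute with the $\Gamma$-action, which is exactly what makes the Floquet--Bloch diagonalization possible.

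Next I would use the identification $V(X)\cong\Gamma\times V(\XX)$ furnished by the entire part and the projection: every $x$ is uniquely $x=\lfloor x\rfloor\,\widehat{\check x}$, so $f$ is encoded by the $\ell^2(\XX)$-valued sequence $\mu\mapsto\big(\xx\mapsto f(\mu\hat\xx)\big)$. The key bookkeeping is to locate the terminal vertices of the hopping edges. By the covering property (ii) and the regularity (iv) of Definition \ref{topocrystal}, the map $\omega$ restricts to a bijection from the oriented edges $A_{\mu\hat\xx_j}$ onto $A_{\xx_j}$ in the finite base graph; and for an edge $\e$ with $\check\e=\ee=(\xx_j,\xx_\ell)$ and $\lfloor o(\e)\rfloor=\mu$, the relation $\eta(\e)=\eta(\ee)$ together with $\eta(\e)=\lfloor t(\e)\rfloor\lfloor o(\e)\rfloor^{-1}$ forces $t(\e)=\big(\eta(\ee)\mu\big)\hat\xx_\ell$. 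Thus each hopping term amounts to a translation by $\eta(\ee)$ in the group variable composed with the vertex move $\xx_j\to\xx_\ell$.

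I would then apply the Fourier transform built into $\U$ as in \eqref{def_de_U}: a translation by $\eta(\ee)$ in the $\Gamma$-variable becomes multiplication by the character $\xi\mapsto e^{2\pi i\,\xi\cdot\eta(\ee)}$. Hence the hopping contribution to the $(j,\ell)$ entry collapses to $-\sum_{\ee=(\xx_j,\xx_\ell)}\tfrac{m_0(\ee)}{m_0(\xx_j)}e^{2\pi i\,\xi\cdot\eta(\ee)}$, while the diagonal part is untouched and gives $\big(\deg_{m_0}(\xx_j)+R_0(\xx_j)\big)\delta_{j\ell}$. Finally, conjugating by $\I$ from \eqref{def_de_I} rescales the $(j,\ell)$ entry by the factor $m_0(\xx_j)^{1/2}m_0(\xx_\ell)^{-1/2}$, which converts the weight $m_0(\xx_j)^{-1}$ into the symmetric weight $m_0(\xx_j)^{-1/2}m_0(\xx_\ell)^{-1/2}$ and produces exactly the claimed $h_0(\xi)_{j\ell}$.

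To finish I would record two easy verifications. Real analyticity of $h_0$ is immediate since $\XX$ is finite: every entry is a finite sum of the entire functions $\xi\mapsto e^{2\pi i\,\xi\cdot\eta(\ee)}$ plus a constant. Hermiticity of each fiber $h_0(\xi)$ follows by pairing each oriented edge $\ee=(\xx_j,\xx_\ell)$ with its reversal $\overline{\ee}=(\xx_\ell,\xx_j)$, using $m_0(\ee)=m_0(\overline{\ee})$ and $\eta(\overline{\ee})=\eta(\ee)^{-1}$, which yields $\overline{h_0(\xi)_{j\ell}}=h_0(\xi)_{\ell j}$. I expect the main obstacle to be precisely the middle step: pinning down the lifted edges and verifying that the entire part of the terminal vertex shifts by exactly $\eta(\ee)$, which is where the covering and regularity axioms are genuinely used; once this translation structure is established, the remaining steps are the routine manipulations of periodic-operator diagonalization, and the boundedness of $\deg_{m_0}$ guarantees that the computation on $C_c(X)$ extends to all of $\ell^2(X,m_0)$.
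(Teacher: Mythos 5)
Your computation is correct and is exactly the standard Floquet--Bloch diagonalization that the paper invokes (it defers the proof to \cite[Prop.~4.7]{PR}, which proceeds by the same direct calculation): splitting off the diagonal $\deg_{m_0}+R_0$ part, identifying $t(\e)=(\mu\,\eta(\ee))\widehat{t(\ee)}$ for $\e\in A_{\mu x_j}$ via the covering and regularity axioms, converting the resulting $\Gamma$-translation into the character $e^{2\pi i\,\xi\cdot\eta(\ee)}$ under $\U$, and symmetrizing the weights by conjugation with $\I$. All steps, including the hermiticity check via $m_0(\ee)=m_0(\overline{\ee})$ and $\eta(\overline{\ee})=\eta(\ee)^{-1}$, are sound.
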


\subsection{Mourre theory}\label{subsec_Mourre}

In this section we first recall some definitions related to Mourre theory, such as some
regularity conditions as well as the meaning of a Mourre estimate.
This version of Mourre theory is suitable for bounded operators, 
a more general version and more information are provided in \cite[Chap.~7]{ABG96}.

Let us consider a Hilbert space $\H$ with scalar product
$\langle\;\!\cdot\;\!,\;\!\cdot\;\!\rangle$ and norm $\|\;\!\cdot\;\!\|$. Let
also $S$ and $A$ be two self-adjoint operators in $\H$. The operator $S$ is assumed to be bounded,
and we write $\D(A)$ for the domain of $A$. The spectrum of $S$ is denoted by $\sigma(S)$ and its spectral measure by
$E^S(\;\!\cdot\;\!)$. For shortness, we also use the notation
$E^S(\lambda;\varepsilon):=E^S\big((\lambda-\varepsilon,\lambda+\varepsilon)\big)$
for all $\lambda\in\R$ and $\varepsilon>0$.

The operator $S$ belongs to $C^1(A)$ if the map
\begin{equation}\label{C1}
\R\ni t\mapsto e^{-itA}S e^{itA}\in\B(\H)
\end{equation}
is strongly of class $C^1$ in $\H$. Equivalently, $S\in C^1(A)$ if the quadratic form
\begin{equation*}
\D(A)\ni\varphi
\mapsto\langle iA\varphi,S\varphi\rangle -\langle iS\varphi,A\varphi\rangle \in \C
\end{equation*}
is continuous in the topology of $\H$.
In such a case, this form extends uniquely to a continuous form on $\H$, and the corresponding bounded self-adjoint operator
is denoted by $[iS,A]$.
This $C^1(A)$-regularity of $S$ with respect to $A$ is the basic ingredient for any
investigation in Mourre theory.

Let us also define some stronger regularity conditions. First of all, $S\in C^2(A)$ if the map \eqref{C1}
is strongly of class $C^2$ in $\H$. A weaker condition can be expressed as follows:
$S\in C^{1,1}(A)$ if
\begin{equation*}
\int_0^1  \frac{\d t}{t^2}\;\!\big\| e^{-itA}S e^{itA}+ e^{itA}S e^{-itA}-2S\big\|<\infty.
\end{equation*}
Then, the following inclusions hold: $C^2(A)\subset C^{1,1}(A) \subset C^1(A)$.

For any $S\in C^1(A)$, let us now introduce two subsets of $\R$ which will play a central role. The first one is called the Mourre set of $S$ with respect to $A$, and for the definition of the second one we denote by $\K(\H)$ the set of compact operators on $\H$.
Namely, one sets
\begin{equation*}
\mu^A(S):=\big\{\lambda \in \R\mid \exists \varepsilon>0, a>0 \hbox{ s.t. } E^S(\lambda;\varepsilon)[iS,A]E^S(\lambda;\varepsilon)\geq a E^S(\lambda;\varepsilon) \big\}
\end{equation*}
as well as the larger subset of $\R$ defined by
\begin{align*}
\tilde{\mu}^A(S):=\big\{\lambda \in \R\mid \ & \exists \varepsilon>0, a>0, K\in \K(\H) \hbox{ s.t. }  \\
& \quad E^S(\lambda;\varepsilon)[iS,A]E^S(\lambda;\varepsilon)\geq a E^S(\lambda;\varepsilon)+K \big\}.
\end{align*}

Let us still mention how a perturbative scheme can be developed.
Consider a perturbation $K\in \K(\H)$ and assume that $K$ is self-adjoint and belongs to $C^{1}(A)$.
Even if $\mu^A(S)$ is known, it is usually quite difficult to compute the corresponding set
$\mu^A(S+K)$ for the self-adjoint operator $S+K$.
However, the set $\tilde{\mu}^A(S)$ is much more stable since $\tilde{\mu}^A(S)=\tilde{\mu}^A(S+K)$,
as a direct consequence of \cite[Thm.~7.2.9]{ABG96}.
Based on this observation, the following adaptation of \cite[Thm.~7.4.2]{ABG96} can be stated in our context:

\begin{Theorem}\label{mourreap}
Let $S$ be a self-adjoint element of $\B(\H)$ and assume that $S\in C^{1,1}(A)$.
Let $K\in \K(\H)$ and assume that $K$ is self-adjoint and belongs to $C^{1,1}(A)$.
Then, for any closed interval $I\subset \tilde{\mu}^A(S)$ the operator $S+K$ has at most
a finite number of eigenvalues in $I$, and no singular continuous spectrum in $I$.
\end{Theorem}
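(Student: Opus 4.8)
The plan is to reduce the assertion to the abstract Mourre theorem \cite[Thm.~7.4.2]{ABG96} applied to the \emph{single} self-adjoint operator $T:=S+K$, exploiting the stability of the enlarged Mourre set recorded just above in order to convert the hypothesis $I\subset\tilde{\mu}^A(S)$ into a hypothesis on $T$ itself. The whole content of the statement is thus the combination of two cited facts from \cite{ABG96}, together with one elementary regularity check.

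First I would verify that $T\in C^{1,1}(A)$. This is purely a matter of linearity: the defining quantity
\begin{equation*}
\int_0^1\frac{\d t}{t^2}\,\big\|e^{-itA}Te^{itA}+e^{itA}Te^{-itA}-2T\big\|
\end{equation*}
depends linearly on $T$ inside the norm, so by the triangle inequality applied under the integral sign it is subadditive in $T$. Hence $C^{1,1}(A)$ is a linear subspace of $\B(\H)$, and since $S\in C^{1,1}(A)$ and $K\in C^{1,1}(A)$ by hypothesis, their sum $T=S+K$ again belongs to $C^{1,1}(A)$, and a fortiori to $C^1(A)$ because $C^{1,1}(A)\subset C^1(A)$. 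As everything is bounded, the commutator $[iT,A]$ is a genuine bounded self-adjoint operator and no domain subtlety arises.

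Next I would place $I$ inside the Mourre set of $T$. The operator $K$ is compact, self-adjoint and lies in $C^1(A)$, so the stability result \cite[Thm.~7.2.9]{ABG96} recalled above yields $\tilde{\mu}^A(T)=\tilde{\mu}^A(S+K)=\tilde{\mu}^A(S)$. Consequently the hypothesis $I\subset\tilde{\mu}^A(S)$ is the same as $I\subset\tilde{\mu}^A(T)$. Here it is worth recalling that $\tilde{\mu}^A(T)$ is an open subset of $\R$, so the closed interval $I$ is in fact a compact subset of an open Mourre region.

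Finally I would invoke \cite[Thm.~7.4.2]{ABG96} for the single operator $T$. Since $T\in C^{1,1}(A)$ and $I\subset\tilde{\mu}^A(T)$, that theorem delivers both conclusions. The finiteness of the number of eigenvalues of $T$ in $I$, each of finite multiplicity, already follows from the Mourre estimate on $I$ together with the virial theorem (requiring only $T\in C^1(A)$): the eigenvalues of $T$ lying in $\tilde{\mu}^A(T)$ have finite multiplicity and can accumulate only at the complement of $\tilde{\mu}^A(T)$, so the compact set $I\subset\tilde{\mu}^A(T)$ contains at most finitely many of them. The absence of singular continuous spectrum of $T$ in $I$ is the deeper half, and this is exactly where the full $C^{1,1}(A)$-regularity of $T$, rather than mere $C^1(A)$, is indispensable. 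I do not anticipate any real obstacle; the only points demanding care are the verification that $C^{1,1}(A)$ is closed under addition and the correct bookkeeping of which of the two cited theorems supplies which portion of the conclusion.
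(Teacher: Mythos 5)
Your proposal is correct and follows essentially the same route as the paper: the theorem there is presented precisely as an adaptation of \cite[Thm.~7.4.2]{ABG96} applied to $S+K$, after invoking the stability $\tilde{\mu}^A(S+K)=\tilde{\mu}^A(S)$ from \cite[Thm.~7.2.9]{ABG96} for the compact self-adjoint perturbation $K\in C^1(A)$. Your additional remarks (closedness of $C^{1,1}(A)$ under addition, openness of $\tilde{\mu}^A$, and the virial-theorem argument for the eigenvalue count) are accurate and merely make explicit what the paper leaves implicit.
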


In order to use the above framework and results, a conjugate operator for $H_0$
has to be exhibited. The construction of this operator is rather long and has been 
provided with details in \cite{PR}. For that reason, we shall not recall it here, 
but exhibit some important properties which will be used subsequently.
In particular, let us just mention that its construction
does not take place in the initial Hilbert space, but in the space 
$L^2(\T^d;\C^n)$ and for the self-adjoint and bounded analytically fibered 
operator $h_0$ introduced in Proposition \ref{prop_def_H}.
It heavily relies on real analycity theory and on a classical result on stratifications of Hironaka.
More precisely, the so-called Bloch variety
$$
\Sigma:=\big\{(\lambda,\xi)\in\R\times\T^d \mid \lambda\in\sigma\big(h_0(\xi)\big)\big\}
$$
together with its decomposition in a family of semi-analytic sets play a central
role in the construction. In particular, they define  a discrete set $\tau\subset \R$ of 
thresholds on which the spectral analysis of $h_0$ can not be carried out.
By using the local property of the Bloch variety, and by following the seminal ideas
presented in \cite{GN98}, a conjugate operator for $h_0$ is constructed in 
\cite[Sec.~5]{PR} and the main result of that section reads:

\begin{Theorem}[Thm.~5.7 of \cite{PR}]\label{hcinfinity}
Let $h_0$ be the multiplication operator in $L^2(\T^d;\C^n)$ defined by 
the real analytic function $\T^d\to M_n(\C)$ 
introduced in Proposition \ref{prop_def_H}.
Let $\tau$ be the set of thresholds mentioned above and let $I$ be any closed interval in $\R\setminus \tau$.
Then, there exists a self-adjoint operator $A_I$ satisfying the following two properties:
\begin{enumerate}
\item[(i)] the operator $h_0$ belongs to $C^2(A_I)$,
\item[(ii)] there exists a constant $a_I>0$ such that
\begin{equation}\label{eq_Mourre}
E^{h_0}(I)\[ih_0,A_I\]E^{h_0}(I)\ge a_I E^{h_0}(I)\ .
\end{equation}
\end{enumerate}
\end{Theorem}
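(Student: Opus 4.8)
The plan is to follow the strategy of G\'erard and Nier \cite{GN98} and to realize $A_I$ as a first-order symmetric differential operator on $\T^d$ built from a suitable smooth real vector field. Concretely, I would seek a field $a_I\in C^\infty(\T^d;\R^d)$ and set
\begin{equation*}
A_I:=\tfrac12\big(a_I\cdot D_\xi+D_\xi\cdot a_I\big),\qquad D_\xi:=-i\nabla_\xi .
\end{equation*}
Since $a_I$ is smooth on the compact manifold $\T^d$, the flow of $a_I$ is complete; the associated one-parameter group of (weighted) pullbacks is unitary on $L^2(\T^d;\C^n)$ and its generator is exactly $A_I$, which is therefore essentially self-adjoint. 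As $h_0$ is the multiplication operator by the real-analytic matrix function $h_0(\cdot)$ and $a_I$ is scalar-valued, a direct computation gives
\begin{equation*}
[\,ih_0,A_I\,]=-\,a_I\cdot(\nabla_\xi h_0),
\end{equation*}
which is again multiplication by a smooth $M_n(\C)$-valued function, hence bounded; a further commutation shows that the second commutator is of the same type. Consequently, once $a_I$ is chosen smooth, assertion (i) --- that $h_0\in C^2(A_I)$ --- holds automatically.

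With this choice the Mourre estimate \eqref{eq_Mourre} reduces to a fiberwise matrix inequality. Denoting by $P_I(\xi):=E^{h_0(\xi)}(I)$ the spectral projection of the matrix $h_0(\xi)$ onto $I$, \eqref{eq_Mourre} is equivalent to
\begin{equation*}
P_I(\xi)\,\big(-a_I(\xi)\cdot\nabla_\xi h_0(\xi)\big)\,P_I(\xi)\ \ge\ a_I\,P_I(\xi)\qquad\text{for all }\xi\in\T^d ,
\end{equation*}
with a constant $a_I>0$ independent of $\xi$. The diagonal part of the left-hand side is controlled by the Feynman--Hellmann identity: along a local real-analytic eigenbranch $\lambda_j(\xi)$ with normalized eigenvector $\psi_j(\xi)$ one has $\langle\psi_j,(\nabla_\xi h_0)\psi_j\rangle=\nabla_\xi\lambda_j(\xi)$, so the relevant quantities are the group velocities $-a_I(\xi)\cdot\nabla_\xi\lambda_j(\xi)$ for the bands with $\lambda_j(\xi)\in I$. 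The key structural input is that $I$ avoids $\tau$: the threshold set consists of the energies at which the energy projection $\Sigma\to\R$, $(\lambda,\xi)\mapsto\lambda$, fails to be submersive on a stratum of the Bloch variety, so for $\lambda\in I$ each energy surface $\{\xi:\lambda\in\sigma(h_0(\xi))\}$ is a regular level set, i.e. $\nabla_\xi\lambda_j\neq0$ there. Locally one may then choose $a_I$ aligned with $-\nabla_\xi\lambda_j$ so that the corresponding slope is strictly positive.

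The crux of the proof --- and the step I expect to be the main obstacle --- is to glue these local, single-band choices into one globally defined smooth field $a_I$ for which the matrix inequality holds uniformly, in spite of band crossings, where the individual $\lambda_j$ fail to be differentiable and $P_I(\xi)$ has rank larger than one so that off-diagonal terms must be controlled. Here I would invoke the semi-analytic geometry underlying the Bloch variety: by Hironaka's stratification, $\Sigma$ decomposes into finitely many semi-analytic strata, and since $\T^d$ is compact the energy projection is proper, so its set of critical values is discrete. This both makes $\tau$ discrete and ensures that over the compact interval $I\subset\R\setminus\tau$ the projection is submersive on every relevant stratum, whence the finitely many energy surfaces meeting $I$ are compact smooth manifolds. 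Following \cite{GN98}, one constructs on a finite open cover local escape fields yielding a strictly positive slope for all bands meeting each chart, and patches them by a subordinate partition of unity; since the fiberwise inequality is preserved under convex combinations, the patched $a_I$ still produces a strictly positive form. The genuinely delicate point is to guarantee simultaneous positivity at the crossing points, where several bands lie in $I$ at once, so that $P_I(\xi)(-a_I\cdot\nabla_\xi h_0)P_I(\xi)$ is positive as a matrix and not merely in trace --- this is precisely what the G\'erard--Nier escape-function construction, adapted to the stratified Bloch variety, is designed to achieve.

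Finally, compactness of $\T^d$ and of the energy surfaces upgrades the pointwise strict positivity to a uniform lower bound, providing the constant $a_I>0$ in \eqref{eq_Mourre}; combined with the smoothness of $a_I$ and hence $h_0\in C^2(A_I)$, this yields both assertions. As the statement is exactly Theorem 5.7 of \cite{PR}, I would import the detailed escape-function construction from \cite[Sec.~5]{PR} for the global step.
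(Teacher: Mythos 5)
First, note that the paper you are working from does not actually prove this theorem: it is imported verbatim from \cite[Thm.~5.7]{PR}, whose construction follows \cite{GN98}, so the only meaningful comparison is with that construction. Your overall scaffolding matches it --- Hironaka stratification of the Bloch variety $\Sigma$, submersivity of the energy projection on all strata away from $\tau$, local escape fields, compactness of $I$ and of $\T^d$ for uniformity --- and you correctly identify crossings as the delicate point. But your concrete ansatz contains a genuine gap: an operator $A_I=\frac12(a_I\cdot D_\xi+D_\xi\cdot a_I)$ built from a \emph{single scalar field} $a_I\in C^\infty(\T^d;\R^d)$ cannot satisfy \eqref{eq_Mourre} for a general closed interval $I\subset\R\setminus\tau$. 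The estimate must hold on all of $E^{h_0}(I)$, i.e.\ fiberwise the matrix $P_I(\xi)\bigl(-a_I(\xi)\cdot\nabla_\xi h_0(\xi)\bigr)P_I(\xi)$ must be positive simultaneously for \emph{every} band whose value at $\xi$ lies in $I$. Since $I$ is a fixed interval rather than a shrinking neighborhood of one energy, it can contain, over the same $\xi_0$, two bands at well-separated energies with (anti-)parallel opposite group velocities; then by the very Feynman--Hellmann identity you invoke, no single vector $a_I(\xi_0)$ makes both diagonal entries $-a_I\cdot\nabla\lambda_j$ positive. The same pathology occurs locally at avoided crossings, where two sheets are arbitrarily close in energy at the same $\xi$ with opposite velocities. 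Your proposed repair --- patching by a partition of unity, ``since the fiberwise inequality is preserved under convex combinations'' --- does not help here: convex combinations of fields each adapted to a different band are in general adapted to none, because the constraint is a system of strict inequalities with incompatible solution cones at the same point $\xi_0$.

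The construction in \cite[Sec.~5]{PR} (following \cite{GN98}) resolves exactly this by \emph{energy localization}: the conjugate operator is assembled from finitely many pieces of the schematic form $\theta_j(h_0)\,\frac12\,(a_j\cdot D_\xi+D_\xi\cdot a_j)\,\theta_j(h_0)$, where the cutoffs $\theta_j(h_0(\xi))$ are matrix-valued and select the energy shell on which $a_j$ is an escape field; equivalently, the coefficient is a stratified lift of $\partial_\lambda$ defined on a neighborhood of $\Sigma\cap(I\times\T^d)$, hence a function of $(\lambda,\xi)$ quantized with $\lambda\mapsto h_0$, not a function of $\xi$ alone. This is also what the properties recorded after the theorem reflect ($A_I$ essentially self-adjoint on $C^\infty(\T^d;\C^n)$, $\D(\Lambda)\subset\D(A_I)$, $\Lambda^{-2}A_I^2$ bounded): $A_I$ is first order with smooth \emph{matrix} coefficients. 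Two smaller inaccuracies: positivity of the full block at genuine crossings follows from first-order degenerate perturbation theory along directions tangent to the crossing stratum (where all branches share their slope, precisely because $\lambda\notin\tau$ makes $p$ submersive on that stratum), not from the diagonal Feynman--Hellmann terms alone; and discreteness of $\tau$ does not follow from properness of the projection (which only gives closedness of the critical values) but from the finiteness of the semi-analytic stratification of the compact variety $\Sigma$. With the scalar-field ansatz replaced by the energy-localized one, the rest of your outline is the intended argument.
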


Let us conclude this section with a few remarks borrowed from \cite{PR}.
First of all, as a consequence of \eqref{eq_Mourre}, it follows that for any 
closed interval $I\equiv [a,b]\subset \R\setminus \tau$, one has
\begin{equation}\label{eq_mu}
(a,b)\subset \mu^{A_I}(h_0)\subset \tilde\mu^{A_I}(h_0).
\end{equation}
Secondly, the operator $A_I$ is essentially
self-adjoint on $C^\infty(\T^d;\C^n)$. Now, let us set
$\Delta_{\T^d}$ for the Laplace operator on $L^2(\T^d)$, and define
$\Lambda:=\big(\Id-\Delta_{\T^d}\big)^{\frac{1}{2}}\otimes \Id_n$ which is a self-adjoint operator in $L^2(\T^d;\C^n)$. This operator satisfies $\D(\Lambda)=\H^1(\T^d;\C^n)$, where $\H^1(\T^d;\C^n)$ is the $1^{\rm{st}}$ Sobolev space on $\T^d$ with values in $\C^n$, and the inclusion $\D(\Lambda)\subset \D(A_I)$ holds.
In addition,  the closure of the operator $\Lambda^{-2} A_I^2$, defined on the domain $\D(A_I^2)$, corresponds to a bounded operator in $L^2(\T^d;\C^n)$.
This information will be used later for an application of the abstract result for short-range type perturbations presented in \cite[Thm.~7.5.8]{ABG96}.

\section{Proof of the main result}\label{sec_pert}
\setcounter{equation}{0}

In this section we provide the proof of the main result.
At a technical level, our work consists in considering the difference between the 
operator $H$ introduced in \eqref{full}
and the periodic operator $H_0$ introduced in 
\eqref{h0vertices}, and to show that this difference 
belongs to $C^{1,1}(A_{I})$. 
An application of Mourre theory will then lead to the results.

The first result is obtained by a simple computation, using the unitary transformations $\J$, 
$\I$, and $\U$ introduced respectively in \eqref{eq_map_J}, \eqref{def_de_I}, and \eqref{def_de_U}. 
For its statement, let us define the following convenient map:
\begin{equation*}
\imath:V(X)\to\{1,\dots,n\}, \qquad x_{\imath(x)}:=\widehat{\check{x}},
\end{equation*}
which associates to any $x\in V(X)$ the index of the representative $x_j\in V(X)$
which belongs to the same orbit under the action of the group $\Z^d$.
Note that we shall also use the natural identification of $\F$ with $\F \otimes \Id_{n}$
whenever necessary.

\begin{Lemma}\label{lem_dif}
In the framework considered above, 
the following equality holds in $L^2(\T^d;\C^n)$\;\!:
\begin{equation}\label{eq_decomposition}
\I \U \big( \Delta(X,m_{0}) - \J^* \Delta(\X, m) \J \big) \U^{*} \I^{*} 
=  \Op(b) + \F L_-\F^* -\F L_+\F^*,
\end{equation}
where $\Op(b)$ is the toroidal pseudodifferential operator with symbol $b$ defined in \cite[Prop.~6.6]{PR} and with $m(\e):=m_0(\e)$ for any $\e \in F_-$, and $L_\pm$ are given on $\varphi \in C_c(\Z^d;\C^n)$ and $\mu \in \Z^d$ by 
\begin{equation}\label{eq_K_-}
[L_- \varphi]_{j}(\mu) 
:=  
\sum_{\e \in A(F_-)_{\mu x_{j}}} \frac{m_0(\e)}{m(\mu x_{j})^{1/2} m(t(\e))^{1/2}} \varphi_{\iota(t(\e))}(\lfloor t(\e) \rfloor) - 
\sum_{\e \in A(F_-)_{\mu x_{j}}} \frac{m_0(\e)}{m(\mu x_{j})} \varphi_{j}(\mu) 
\end{equation}
and 
\begin{equation}\label{eq_K_+}
[L_+ \varphi]_{j}(\mu)
:= 
\sum_{\e \in A(F_+)_{\mu x_{j}}} \frac{m(\e)}{m(\mu x_{j})^{1/2} m(t(\e))^{1/2}} \varphi_{\iota(t(\e))}(\lfloor t(\e) \rfloor) - 
\sum_{\e \in A(F_+)_{\mu x_{j}}} \frac{m(\e)}{m(\mu x_{j})} \varphi_{j}(\mu). 
\end{equation}
\end{Lemma}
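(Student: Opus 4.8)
The plan is to compute the kernel of the operator $\Delta(X,m_0) - \J^*\Delta(\X,m)\J$ directly in $\ell^2(X,m_0)$, then conjugate by the composed unitary $\I\U$ to pass to $L^2(\T^d;\C^n)$, tracking how each of the three contributions (the modified weights on pre-existing edges, the removed edges in $F_-$, and the added edges in $F_+$) transforms. The starting point is the explicit action of $\J$ from \eqref{eq_map_J}: since $[\J f](x) = (m_0(x)/m(x))^{1/2}f(x)$ and $\J^*$ carries the reciprocal factor, conjugating $\Delta(\X,m)$ by $\J$ replaces each factor $m(x)^{-1}$ (from the $1/m(o(\e))$ in the Laplacian) by a symmetrized weight $m(o(\e))^{-1/2}m(t(\e))^{-1/2}$. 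This is exactly the mechanism producing the square-root denominators visible in \eqref{eq_K_-} and \eqref{eq_K_+}.

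\textbf{First} I would split the edge set of $\X$ as $E(\X) = (E(X)\setminus F_-)\cup F_+$ and write
\begin{equation*}
\J^*\Delta(\X,m)\J = \J^*\Delta\big(\X,m\big|_{E(X)\setminus F_-}\big)\J + \J^*\Delta(\X,m\big|_{F_+})\J,
\end{equation*}
treating the Laplacian as additive over disjoint edge sets (each term being the Laplacian built from only those edges, which is legitimate because the degree function decomposes as a sum over $\A_x$). The first summand, compared against $\Delta(X,m_0)$, differs only through the replacement of the periodic weights $m_0$ by $m$ on the surviving edges and at the vertices; this is precisely the situation already analyzed in \cite{PR}, and after conjugation by $\I\U$ it yields the toroidal pseudodifferential operator $\Op(b)$, \textbf{with the convention} that one artificially sets $m(\e):=m_0(\e)$ on $\e\in F_-$ so that $E(X)\setminus F_-$ is completed back to $E(X)$ and the missing edges are bookkept separately. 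That bookkeeping produces exactly the compensating term $\F L_-\F^*$: the edges of $F_-$ were present in $\Delta(X,m_0)$ with weight $m_0$ but absent from $\J^*\Delta(\X,m)\J$, so their contribution survives in the difference with the symmetrized denominators of $\J$-conjugation, which is what \eqref{eq_K_-} records. The $F_+$ edges appear only in $\J^*\Delta(\X,m)\J$ and enter the difference with a minus sign, giving $-\F L_+\F^*$ as in \eqref{eq_K_+}.

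\textbf{The computational heart} is to verify that the $F_\pm$ contributions, after conjugation by $\I\U$, are genuinely of the claimed multiplication-by-$\Z^d$-convolution form $\F L_\pm\F^*$. I would take $f\in C_c(X)$, apply $\U$ via \eqref{def_de_U}, and evaluate the Laplacian term coming from a new or removed edge $\e$ at a point $\mu x_j$; the diagonal (subtraction) piece $-\sum_{\e}\frac{m(\e)}{m(o(\e))}\varphi_j(\mu)$ is immediate since it acts fiberwise, while the off-diagonal piece requires rewriting $f(t(\e))$ in terms of the Floquet components $\varphi_{\iota(t(\e))}$ evaluated at $\lfloor t(\e)\rfloor$, using the identity $\lfloor t(\e)\rfloor\,\widehat{\check{t(\e)}} = t(\e)$ and the definition of $\imath$. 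Because the edges in $F_\pm$ are \emph{not} $\Gamma$-periodic, the resulting operators $L_\pm$ are genuinely $\mu$-dependent convolution-type operators on $\Z^d$ rather than multiplication operators on $\T^d$, which is exactly why they are presented in the $\F(\cdot)\F^*$ form rather than absorbed into $\Op(b)$.

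\textbf{The main obstacle} I anticipate is purely organizational: carefully matching the symmetrized denominators $m(\mu x_j)^{1/2}m(t(\e))^{1/2}$ across the $\J$-conjugation while keeping the role of $m_0$ versus $m$ straight on the three disjoint families of edges — in particular ensuring that on $F_-$ the weight appearing is $m_0$ (the original crystal weight) whereas the vertex denominators are the perturbed $m$, since $\J$ is built from the ratio $m_0/m$ on vertices only. A second point requiring care is confirming that the $F_-$ edges, which are not in $A(\X)$ as oriented edges of the perturbed graph, are handled consistently with the convention $m(\e):=m_0(\e)$ so that $\Op(b)$ absorbs the full $E(X)$ sum and the separately isolated $L_-$ term exactly cancels the spurious $F_-$ contribution that convention introduces. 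Once the denominators and index maps are tracked correctly on $C_c(X)$, the asserted equality \eqref{eq_decomposition} follows by density and continuity of all operators involved.
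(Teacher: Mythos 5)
Your proposal is correct and follows essentially the same route as the paper, which itself only remarks that this lemma ``is obtained by a simple computation'' using the unitaries $\J$, $\I$ and $\U$: the convention $m(\e):=m_0(\e)$ on $F_-$ completes the surviving-edge part back to the full $E(X)$ sum absorbed by $\Op(b)$ of \cite[Prop.~6.6]{PR}, and the compensating $F_-$ and $F_+$ contributions become $+\F L_-\F^*$ and $-\F L_+\F^*$ exactly as you describe. One small imprecision worth noting: the clean symmetrized denominators $m(o(\e))^{-1/2}m(t(\e))^{-1/2}$ are not produced by the $\J$-conjugation alone, which leaves residual factors $m_0(t(\e))^{1/2}/m_0(o(\e))^{1/2}$; these are cancelled only by the additional $m_0^{1/2}$ weights carried by $\I$ together with the Floquet identification underlying $\U$ — but since you flag precisely this denominator bookkeeping as the computational heart, the argument goes through.
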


In the previous statement, the term $\Op(b)$ has been thoroughly studied in \cite{PR}, 
we shall not reproduce its analysis here. On the other hand, we shall concentrate
on the terms $L_\pm$.
The summations in $L_-$ contain only a finite number of contributions, and this term can be easily treated separately. Alternatively, since the two expressions \eqref{eq_K_-} and \eqref{eq_K_+}  are formally the same, the study of $L_-$ can be mimicked from the analysis of $L_+$ provided below, once it is observed that the conditions of the next proposition are always satisfied for the finite set $F_-$.

For the next statement, recall that the partial degree function $\deg_{F_+}$
has been introduced in \eqref{def_deg_F}.

\begin{Proposition}\label{decay_of_F}
Assume that the measure $m$ satisfies  
\begin{equation}\label{decay_of_F_1}
\int_{1}^{\infty} \d \lambda \sup_{\lambda < | \lfloor x \rfloor| < 2\lambda} 
\deg_{F_+}(x) < \infty,
\end{equation}
\begin{equation}\label{decay_of_F_2}
\int_{1}^{\infty} \d \lambda \sup_{ x\in V(\X) } 
\sqrt{\sum_{\substack{\e \in A(F_+)_{x} \\ \lambda\leq | \lfloor t(\e) \rfloor| \leq2 \lambda}}  \frac{m(\e)}{m(o(\e))}}  < \infty.
\end{equation}
Then the term  $\F L_+\F^*$ belongs to $C^{1,1}(A_{I})$. 
\end{Proposition}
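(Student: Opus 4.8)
The plan is to show that $\F L_+\F^*$ belongs to $C^{1,1}(A_I)$ by invoking the abstract criterion for short-range type perturbations from \cite[Thm.~7.5.8]{ABG96}, exactly in the form prepared by the remarks at the end of Section \ref{subsec_Mourre}. Recall that those remarks provide the operator $\Lambda=(\Id-\Delta_{\T^d})^{1/2}\otimes\Id_n$ with $\D(\Lambda)\subset\D(A_I)$ and with $\Lambda^{-2}A_I^2$ bounded. The standard sufficient condition for membership in $C^{1,1}(A_I)$ furnished by that abstract theorem is a Littlewood--Paley type dyadic bound: if $B$ is a bounded self-adjoint operator and $\{\lambda<\langle Q\rangle<2\lambda\}$ denotes the characteristic function of a dyadic annulus in the position variable conjugate to $\F$ (that is, in $\mu\in\Z^d$, which here is $\lfloor\,\cdot\,\rfloor$), then
\begin{equation*}
\int_1^\infty \d\lambda\;\bigl\| \chi\bigl(\lambda<\langle Q\rangle<2\lambda\bigr)\, B\bigr\| <\infty
\end{equation*}
together with the analogous bound with $B$ on the left, implies $B\in C^{1,1}(A_I)$. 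So first I would reduce the claim to establishing two such dyadic integral estimates for $B=\F L_+\F^*$, one localizing on the left and one on the right in the variable $\lfloor\,\cdot\,\rfloor$.

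Next I would split $L_+$ into its two pieces following the decomposition in \eqref{eq_K_+}: the \emph{diagonal} term, whose $j$-th component at $\mu$ is $\sum_{\e\in A(F_+)_{\mu x_j}}\frac{m(\e)}{m(\mu x_j)}\varphi_j(\mu)$, and the \emph{off-diagonal} hopping term involving $\varphi_{\iota(t(\e))}(\lfloor t(\e)\rfloor)$. The diagonal term is a multiplication operator (in the $\mu$ variable) by the function $x\mapsto\deg_{F_+}(x)$; its dyadic-localized norm on the annulus $\{\lambda<|\lfloor x\rfloor|<2\lambda\}$ is bounded by $\sup_{\lambda<|\lfloor x\rfloor|<2\lambda}\deg_{F_+}(x)$, so hypothesis \eqref{decay_of_F_1} gives exactly the convergent dyadic integral. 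This piece is symmetric under left/right localization since it is diagonal, so both required estimates follow at once.

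The off-diagonal hopping term is where the work lies, and this is the step I expect to be the main obstacle. Here the operator moves mass from the site $t(\e)$ to $o(\e)=\mu x_j$, so a localization on the annulus $\{\lambda<|\lfloor o(\e)\rfloor|<2\lambda\}$ (left localization) is controlled differently from a localization on $\{\lambda<|\lfloor t(\e)\rfloor|<2\lambda\}$ (right localization), precisely because the perturbation is non-local and the two endpoints may be arbitrarily far apart. The natural way to bound the operator norm of this localized hopping term is by a Schur test: the kernel $K(x,y)=\frac{m(\e)}{m(x)^{1/2}m(y)^{1/2}}$ (summed over edges $\e$ joining $x$ and $y$) has its $\ell^1$ row and column sums controlled by Cauchy--Schwarz, yielding a bound of the form
\begin{equation*}
\Bigl(\sup_{x}\sum_{\substack{\e\in A(F_+)_x\\ \lambda\le|\lfloor t(\e)\rfloor|\le 2\lambda}}\frac{m(\e)}{m(o(\e))}\Bigr)^{1/2}
\Bigl(\sup_{y}\sum_{\substack{\e\in A(F_+)_y}}\frac{m(\e)}{m(o(\e))}\Bigr)^{1/2},
\end{equation*}
where the second factor is uniformly bounded by $\sup_x\deg_{F_+}(x)<\infty$ (finiteness guaranteed by the boundedness of $\deg_m$), and the first factor is exactly the square root appearing inside the integral of \eqref{decay_of_F_2}. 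Thus the dyadic integral of the localized norm is dominated by a constant times the integral in \eqref{decay_of_F_2}, which is finite. I would carry out the Schur test symmetrically so that both the left- and right-localized estimates reduce to \eqref{decay_of_F_2} (using $m(\e)=m(\overline{\e})$ to interchange the roles of the two endpoints). Combining the diagonal and off-diagonal bounds, both required dyadic integrals converge, and \cite[Thm.~7.5.8]{ABG96} then yields $\F L_+\F^*\in C^{1,1}(A_I)$.
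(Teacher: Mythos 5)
Your proposal is correct and follows essentially the same route as the paper: both reduce the claim to the dyadic integral criterion of \cite[Thm.~7.5.8]{ABG96} via the operator $\Lambda$, treat the diagonal part of $L_+$ as a multiplication operator controlled by \eqref{decay_of_F_1}, and bound the localized hopping part by a weighted Schur/Cauchy--Schwarz estimate whose two factors are precisely $\sup\deg_{F_+}$ and the $t(\e)$-localized sum of \eqref{decay_of_F_2} (the paper writes this Schur test out by hand via Cauchy--Schwarz, Fubini, and the edge reversal $\e\mapsto\overline{\e}$). The only cosmetic difference is that you require both left- and right-localized estimates, which is redundant here since $L_+$ is symmetric, so the two norms coincide.
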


\begin{proof}
This proof consists in an application of an abstract result presented in \cite[Thm.~7.5.8]{ABG96}. 
We shall thus check the assumptions of this theorem with $\GG=\HH=L^{2}(\T^{d};\C^{n})$ and $\Lambda:=\big(\Id-\Delta_{\T^d}\big)^{\frac{1}{2}}\otimes \Id_n$. 
Thanks to the information at the end of section \ref{subsec_Mourre}, 
it suffices to show that there exists $\theta \in C_{c}^{\infty}((0,\infty))$ not identically zero such that 
\begin{equation}\label{eq_condition}
\int_{1}^{\infty}\d\lambda \left\| \theta\left( \frac{\Lambda}{\lambda} \right)\F L_+ \F^* \right\|_{\B(L^{2}(\T^{d};\C^{n}))} < \infty.
\end{equation}
 
From now we consider $\theta \in C_{c}^{\infty}\big((0,\infty);[0,1]\big)$ with support 
contained in $(\sqrt{2}, 2)$.
Then one has
\begin{equation*}
\left\| \theta\left( \frac{\Lambda}{\lambda} \right) \F L_+ \F^*\right\|_{\B(L^{2}(\T^{d};\C^{n}))} =
\left\| \theta\left( \frac{\langle N \rangle}{\lambda} \right)   L_+ \right\|_{\B(\ell^{2}(\Z^{d};\C^{n}))},
\end{equation*}
where $\langle N\rangle$ denotes the multiplication operator by the function
$\mu\mapsto (1+|\mu|^2)^\frac{1}{2}$ in $\ell^{2}(\Z^{d};\C^{n})$.
For $\varphi=(\varphi_{1}, \ldots, \varphi_{n})$ with each $\varphi_{j} \in C_c(\Z^{d})$, one has
\begin{align}
\nonumber & \left\| \theta\left( \frac{\langle N \rangle}{\lambda} \right)  L_+ 
\varphi \right\|^{2}_{\ell^{2}(\Z^{d};\C^{n})} \\
\nonumber &= \sum_{\mu\in\Z^{d}}\sum_{j=1}^{n} \bigg| \theta\left(\frac{\langle \mu \rangle}{\lambda}\right) 
\sum_{\e \in A(F_+)_{\mu x_{j}}} \Big( \frac{m(\e)}{m(\mu x_{j})^{1/2} m(t(\e))^{1/2}} \varphi_{\iota(t(\e))}(\lfloor t(\e) \rfloor) 
- \frac{m(\e)}{m(\mu x_{j})} \varphi_{j}(\mu) \Big) \bigg|^{2} \\
\nonumber &
\leq 2 \sum_{\mu\in\Z^{d}}\sum_{j=1}^{n}  \theta\left(\frac{\langle \mu \rangle}{\lambda}\right)^{2}  \bigg( \Big| \sum_{\e \in A(F_+)_{\mu x_{j}}} 
\frac{m(\e)}{m(\mu x_{j})^{1/2} m(t(\e))^{1/2}} \varphi_{\iota(t(\e))}(\lfloor t(\e) \rfloor) \Big|^{2} \\
\label{eq_second_t} &\qquad\qquad \qquad \qquad \qquad \quad + \Big| \sum_{\e \in A(F_+)_{\mu x_{j}}}  \frac{m(\e)}{m(\mu x_{j})} \varphi_{j}(\mu) \Big|^{2} \bigg) .
\end{align}

Using Fubini's theorem and considering opposite arrows $\bar{\e}$ instead of $\e$, 
the first term can be estimated as follows:
\begin{align*}
& \sum_{\mu\in\Z^{d}}\sum_{j=1}^{n} \bigg| \theta\left(\frac{\langle \mu \rangle}{\lambda}\right) \sum_{e \in A(F_+)_{\mu x_{j}}} 
\frac{m(\e)}{m(\mu x_{j})^{1/2} m(t(\e))^{1/2}} \varphi_{\iota(t(\e))}(\lfloor t(\e) \rfloor) \bigg|^{2} \\
&\leq \sum_{\mu\in\Z^{d}}\sum_{j=1}^{n} \theta\left(\frac{\langle \mu \rangle}{\lambda}\right)^2 \bigg(\sum_{\e \in A(F_+)_{\mu x_{j}}} \frac{m(\e)}{m(\mu x_j)}\bigg)
\bigg(\sum_{\e \in A(F_+)_{\mu x_{j}}} \frac{m(\e)}{m(t(\e))} \big|\varphi_{\iota(t(\e))}(\lfloor t(\e) \rfloor)\big|^2\bigg)  \\
&\leq C \sum_{\mu\in\Z^{d}}\sum_{j=1}^{n} \theta\left(\frac{\langle \mu \rangle}{\lambda}\right)^2  \sum_{\e \in A(F_+)_{\mu x_{j}}} \frac{m(\e)}{m(t(\e))}
\big|\varphi_{\iota(t(\e))}(\lfloor t(\e) \rfloor)\big|^2 \\
& = C \sum_{\e\in A(F_+)} \theta\left(\frac{\langle \lfloor o(\e) \rfloor \rangle}{\lambda}\right)^2 \frac{m(\e)}{m(t(\e))} \big|\varphi_{\iota(t(\e))}(\lfloor t(\e) \rfloor) \big|^2 \\
&= C \sum_{\e\in A(F_+)} \theta\left(\frac{\langle \lfloor t(\e) \rfloor \rangle}{\lambda}\right)^2 \frac{m(\e)}{m(o(\e))} \big|\varphi_{\iota(o(\e))}(\lfloor o(\e) \rfloor) \big|^{2} \\
&= C \sum_{\mu\in\Z^{d}}\sum_{j=1}^{n}
\sum_{\e \in A(F_+)_{\mu x_{j}}}  \theta\left( \frac{\langle \lfloor t(\e) \rfloor \rangle}{\lambda} \right)^2 \frac{m(\e)}{m(\mu x_j)} \big|\varphi_{j}(\mu) \big|^{2} \\
&\leq C \sum_{\mu\in\Z^{d}}\sum_{j=1}^{n} 
\sum_{\substack{\e \in A(F_+)_{\mu x_{j}} \\ 2\lambda^{2}-1 \leq | \lfloor t(\e) \rfloor|^2 \leq4\lambda^{2}-1 }} \frac{m(\e)}{m(\mu x_j)} \big|\varphi_{j}(\mu) \big|^{2},
\end{align*}
where $C$ is a bound for the function $\deg_m$.
Thus, if we define $\vartheta_1: \Z^{d}\times [1,\infty)\to M_{n}(\C)$ by 
\begin{equation*}
\vartheta_1(\mu,\lambda)_{jj} :=
\sqrt{\sum_{\substack{\e \in A(F_+)_{\mu x_{j}} \\ 2\lambda^{2}-1 \leq | \lfloor t(\e) \rfloor|^2 \leq4\lambda^{2}-1 }} \frac{m(\e)}{m(\mu x_j)}} 
\end{equation*}
and $\vartheta_1(\mu,\lambda)_{j\ell}=0$ if $j\neq \ell$, then we obtain
\begin{align*}
\sum_{\mu\in\Z^{d}}\sum_{j=1}^{n} 
\sum_{\substack{\e \in A(F_+)_{\mu x_{j}} \\ 2\lambda^{2}-1 \leq | \lfloor t(\e) \rfloor|^2 \leq4\lambda^{2}-1 }} \frac{m(\e)}{m(\mu x_j)} \big|\varphi_{j}(\mu) \big|^{2}
&= \big\|\vartheta_1(N,\lambda)\varphi\big\|_{\ell^2(\Z^d;\C^n)}^2 \\
&\leq \big\|\vartheta_1(N,\lambda)\big\|_{\B(\ell^2(\Z^d;\C^n))}^2
\|\varphi\|_{\ell^2(\Z^d;\C^n)}^2.
\end{align*}
Here the notation $\vartheta_1(N,\lambda)$ means simply the multiplication operator
by the function $\mu\mapsto \vartheta_1(\mu,\lambda)$.
Since $\lambda^2\leq 2\lambda^2-1$ for any $\lambda\geq 1$, and since
$4\lambda^2-1<4\lambda^2$, 
we finally observe that
\begin{align}
\nonumber \big\|\vartheta_1(N,\lambda)\big\|_{\B(\ell^2(\Z^d;\C^n))}
&=\sup_{\mu\in\Z^{d}}\max_{1\leq j\leq n}\vartheta_1(\mu,\lambda)_{jj} \\
\nonumber &=\sup_{\mu\in\Z^{d}}\max_{1\leq j\leq n}
\sqrt{\sum_{\substack{\e \in A(F_+)_{\mu x_{j}} \\ 2\lambda^{2}-1 \leq | \lfloor t(\e) \rfloor|^2 \leq4\lambda^{2}-1 }}  \frac{m(\e)}{m(\mu x_j)}} \\
\nonumber & \leq  \sup_{\mu\in\Z^{d}}\max_{1\leq j\leq n}
\sqrt{\sum_{\substack{\e \in A(F_+)_{\mu x_{j}} \\ \lambda\leq | \lfloor t(\e) \rfloor| \leq2 \lambda}}  \frac{m(\e)}{m(\mu x_j)}} \\
\label{eq_1t} & =  \sup_{x\in V(\X)}
\sqrt{\sum_{\substack{\e \in A(F_+)_{x} \\ \lambda\leq | \lfloor t(\e) \rfloor| \leq2 \lambda}}  \frac{m(\e)}{m(o(\e))}}. 
\end{align}

For the second term in \eqref{eq_second_t}, let us define $\vartheta_2:\Z^{d}\to M_{n}(\C)$ by 
$$
\vartheta_2(\mu)_{jj} :=
\sum_{\e \in A(F_+)_{\mu x_{j}}} \frac{m(\e)}{m(\mu x_{j})}
$$ 
and $\vartheta_2(\mu)_{j\ell}=0$ if $j\neq \ell$.
Then the second term in \eqref{eq_second_t} can be computed as
\begin{equation*}
\sum_{\mu\in\Z^{d}}\sum_{j=1}^{n} \bigg| \theta\left(\frac{\langle \mu \rangle}{\lambda}\right)
\sum_{\e \in A(F_+)_{\mu x_{j}}} \frac{m(\e)}{m(\mu x_{j})} \varphi_{j}(\mu) \bigg|^{2} 
= \left\| \theta\left( \frac{\langle N \rangle}{\lambda} \right)  \vartheta_2(N)\varphi \right\|^{2}_{\ell^2(\Z^d;\C^n)},
\end{equation*}
and one has for $\lambda\geq 1$
\begin{equation}\label{eq_2t}
\left\| \theta\left( \frac{\langle N \rangle}{\lambda} \right)  \vartheta_2(N) \right\|_{\B(\ell^2(\Z^d;\C^n))} \leq
\sup_{\lambda< | \mu | < 2\lambda} \max_{1\leq j \leq n} \sum_{\e\in A(F_+)_{\mu x_{j}}} \frac{m(\e)}{m(\mu x_j)}. 
\end{equation}
By inserting  the estimates \eqref{eq_1t} and \eqref{eq_2t} in \eqref{eq_condition},
one obtains the assumptions of the statement.
As a consequence, one has checked all assumptions of \cite[Thm.~7.5.8]{ABG96}, from which one deduces that $\F L_+ \F^*$ belongs to $C^{1,1}(A_{I})$. 
\end{proof}

In the next statement, we summarize the regularity result. 
The initial framework is a topological crystal
$(X,\XX,\omega,\Z^d)$  with $X=\big(V(X),E(X)\big)$ 
together with a $\Z^d$-periodic measure $m_0$ 
and a $\Z^d$-periodic function $R_0:V(X)\to \R$. 

\begin{Proposition}\label{prop_full_reg}
Let $F_+$ be a possibly infinite set of unoriented new edges, let $F_{-}$ be a finite subset of $E(X)$, and consider the graph $\X=\big(V(\X),E(\X)\big)$ given by
$V(\X):=V(X)$ and $E(\X):= \big(E(X)\setminus F_-\big)\cup F_+$.
Assume that the measure $m$ on $\X$ satisfies 
\begin{equation}\label{eq_decay_m}
\int_{1}^{\infty}\d\lambda \sup_{\substack{\e\in E(X)\setminus F_- \\ \lambda<|\lfloor \e\rfloor |<2\lambda}}\left|\frac{m(\e)}{m(o(\e))}-\frac{m_0(\e)}{m_0(o(\e))}\right|<\infty,
\end{equation}
together with conditions \eqref{decay_of_F_1} and \eqref{decay_of_F_2}.
Assume also that the function $R:V(\X)\to \R$ satisfies the decay condition
\begin{equation}\label{eq_decay_R}
\int_{1}^{\infty}\d\lambda \sup_{\lambda<|\lfloor x\rfloor |<2\lambda}\left|R(x)-R_0(x)\right|<\infty,
\end{equation}
Then the difference  $\I \U \Big(\big(- \Delta(X,m_{0}) +R_0\big)- \J^* \big(-\Delta(\X, m)+R\big) \J \Big) \U^{*} \I^{*}$ belongs to $C^{1,1}(A_{I})$. 
\end{Proposition}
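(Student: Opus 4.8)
The plan is to reduce the claim to the three regularity results already at our disposal and to assemble them using the elementary fact that $C^{1,1}(A_I)$ is a linear subspace of $\B\big(L^2(\T^d;\C^n)\big)$: the quantity $\int_0^1 t^{-2}\big\|e^{-itA_I}S e^{itA_I}+e^{itA_I}S e^{-itA_I}-2S\big\|\,\d t$ is subadditive in $S$, so finiteness is preserved under finite sums. It therefore suffices to write the conjugated difference as a finite sum of operators, each known to lie in $C^{1,1}(A_I)$.

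First I would separate the kinetic and potential parts, writing
$$
H_0-\J^* H\J=-\big(\Delta(X,m_0)-\J^*\Delta(\X,m)\J\big)+\big(R_0-\J^* R\J\big).
$$
Since $\J$ and $R$ are both diagonal (multiplication) operators they commute, and a one-line computation gives $\J^* R\J=R$ as a multiplication operator on $\ell^2(X,m_0)$; hence the potential part is simply $R_0-R$. After conjugation by $\I\U$, the term $\I\U(R_0-R)\U^*\I^*$ is a short-range potential perturbation of exactly the type treated in \cite{PR}: under the decay hypothesis \eqref{eq_decay_R} the same application of \cite[Thm.~7.5.8]{ABG96} used there (a simpler instance of the estimate in Proposition \ref{decay_of_F}, requiring only the single sum controlled by \eqref{eq_decay_R}) shows that this term belongs to $C^{1,1}(A_I)$.

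Next I would apply Lemma \ref{lem_dif} to the kinetic part, obtaining
$$
\I\U\big(\Delta(X,m_0)-\J^*\Delta(\X,m)\J\big)\U^*\I^*=\Op(b)+\F L_-\F^*-\F L_+\F^*.
$$
The pseudodifferential term $\Op(b)$ encodes precisely the perturbation of the measure on the pre-existing edges, and its membership in $C^{1,1}(A_I)$ under the short-range condition \eqref{eq_decay_m} is established in \cite{PR}, which I would cite directly. The term $\F L_+\F^*$ lies in $C^{1,1}(A_I)$ by Proposition \ref{decay_of_F}, whose hypotheses \eqref{decay_of_F_1} and \eqref{decay_of_F_2} coincide with the assumed conditions. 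Finally, for $\F L_-\F^*$ I would invoke the remark following Lemma \ref{lem_dif}: since $L_-$ has the same formal structure as $L_+$ and $F_-$ is finite, the two integral conditions are trivially satisfied (the relevant sums vanish for large $\lambda$), so the argument of Proposition \ref{decay_of_F} applies verbatim and yields $\F L_-\F^*\in C^{1,1}(A_I)$.

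Collecting the four terms and using the linearity of $C^{1,1}(A_I)$ then completes the proof. The work here is almost entirely organizational, the analytic content sitting in Proposition \ref{decay_of_F} and in the results imported from \cite{PR}. The one point demanding genuine care is matching each decay hypothesis to the correct sub-result, in particular confirming that \eqref{eq_decay_m} is exactly what \cite{PR} requires for $\Op(b)$ and that \eqref{eq_decay_R} suffices for the potential, together with the reduction $\J^* R\J=R$, so that the potential contribution is genuinely the short-range object $R_0-R$ rather than a weight-dependent modification.
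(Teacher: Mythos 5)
Your proposal is correct and follows essentially the same route as the paper: decompose via Lemma \ref{lem_dif}, import the $C^{1,1}(A_I)$ membership of $\Op(b)$ and of the potential term from \cite{PR} (where the convention $m(\e):=m_0(\e)$ for $\e\in F_-$, already built into Lemma \ref{lem_dif}, keeps the measure strictly positive so those results apply), invoke Proposition \ref{decay_of_F} for $\F L_+\F^*$, observe that the finiteness of $F_-$ makes the corresponding conditions trivial for $\F L_-\F^*$, and conclude by linearity of $C^{1,1}(A_I)$. The explicit simplification $\J^* R\J=R$ is a correct and harmless addition.
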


\begin{proof}
The proof simply consists in observing that Proposition \ref{decay_of_F} together
with some results of \cite{PR} imply the statement. Indeed, instead of setting $m(\e)=0$
for any $\e \in F_-$ let us set $m(\e):=m_0(\e)$ for any $\e\in F_-$. Then, since
$m(\e)>0$ for all $\e\in E(X)$, and since the assumptions \eqref{eq_decay_m}
and \eqref{eq_decay_R} imply the assumptions of Lemmas 6.7 and 6.8 of \cite{PR},
it follows that the term $\Op(b)$ mentioned in Lemma \ref{lem_dif}
and the difference $\I\U(R_0-\J^* R \J)\U^*\I^*$ belong to $C^{1,1}(A_I)$.
For the remaining two contributions exhibited in Lemma \ref{lem_dif}, the term
$\F L_+\F^*$ has been treated in details in Proposition \ref{decay_of_F}.
The term $\F L_-\F^*$, which is much simpler because it contains only a finite sum,
can be treated as $\F L_+\F^* $, once it is observed that conditions 
\eqref{decay_of_F_1} and \eqref{decay_of_F_2} are satisfied for $F_-$ replacing $F_+$.
\end{proof}

\begin{proof}[Proof of Theorem \ref{thm_main}]
For the first two statements of the main theorem, we shall rely on Theorem \ref{mourreap}
with $S=h_0$ and $K$ defined by 
\begin{equation}\label{eq_difference}
\I\U\big(H_0 -\J^* H\J\big)\U^*\I^* = 
-\Op(b) - \F L_-\F^*  + \F L_+\F^* +\I\U(R_0-\J^* R \J)\U^*\I^*,
\end{equation}
where we have used the expressions provided in \eqref{eq_decomposition}.
Clearly, the regularity condition on $h_0$ follows from Theorem \ref{hcinfinity}.
Also, as a consequence of Proposition \ref{prop_full_reg} the terms in 
\eqref{eq_difference} belong to $C^{1,1}(A_I)$.
Let us now check the compactness of the terms defined in the r.h.s.~of \eqref{eq_difference}.

The compactness of the operator $\Op(b)$ has already been obtained
in the proof of \cite[Thm.~2.3]{PR}, under the assumption \eqref{eq_Cm1}.
Since $L_{-}$ is a finite rank operator, it is compact.
For the proof of the compactness of $L_{+}$ and for each $m\in\N$, let $L_{+,m}$ be the finite rank operator defined by $L_{+,m}:= \chi_{[0,m]}(\langle N\rangle)L_{+}$ where $\chi_{[0,m]}$ is the characteristic function on the interval $[0,m]$.
Consider also $\eta\in C^\infty\big((0,\infty);[0,1]\big)$ satisfying 
\begin{equation*}
\eta (s) := 
\begin{cases}
0 & \quad \text{if} \ s\leq\sqrt{2} \\
1 & \quad \text{if} \ s\geq 2.   
\end{cases}
\end{equation*}
According to \cite[Rem.~7.6.9]{ABG96}, the decay
assumptions \eqref{eq_Cm2} and \eqref{eq_Cm3} also
imply the estimate $\int_{1}^{\infty} \d \lambda \big\| \eta\big( \frac{\langle N\rangle}{\lambda}\big) L_{+} \big\|_{\B(\ell^{2}(\Z^{d};\C^{n}))} < \infty$, 
as in the proof of Proposition \ref{decay_of_F}.
In addition, since the expression $\big\| \eta\big( \frac{\langle N\rangle}{\lambda}\big) L_{+} \big\|_{\B(\ell^{2}(\Z^{d};\C^{n}))}$ is non-increasing as $\lambda$ increases, it readily follows that  
$\big\| \eta\big( \frac{\langle N\rangle}{\lambda}\big) L_{+} \big\|_{\B(\ell^{2}(\Z^{d};\C^{n}))} \to 0$ as $\lambda \to \infty$.
Recall now that the operator $\Lambda:=\big(\Id-\Delta_{\T^d}\big)^{\frac{1}{2}}\otimes \Id_n$ has been introduced at the end of Section \ref{subsec_Mourre},
From the above argument, one infers that  
\begin{align*}
\lim_{m\to \infty}\|\F L_+\F^* -\chi_{[0,m]}(\Lambda)\F L_+\F^*\|_{\B(L^2(\T^d;\C^n))} 
& =\lim_{m\to \infty} \| L_{+} - L_{+,m} \|_{\B(\ell^{2}(\Z^{d};\C^{n}))} \\
& = \lim_{m\to \infty} \| \chi_{(m,\infty)}(\langle N\rangle)L_{+} \|_{\B(\ell^{2}(\Z^{d};\C^{n}))} \\
& = 0
\end{align*}
which leads to the compactness of $\F L_+ \F^*$. 
Note also that the compactness of $R_0-\J^* R \J$ follows by a similar 
but much simpler argument, using the decay condition \eqref{eq_Cm4}.
 
We are thus in a suitable position for using Theorem \ref{mourreap}
with $S=h_0$ and $K$ defined by \eqref{eq_difference}. 
For $\tilde{\mu}^{A_I}(h_0)$, one can use the result obtained in \eqref{eq_mu},
by considering a slightly bigger interval $I'$ with $I\subset I'\subset \R\setminus \tau$.
Then, statements 1. and 2. of Theorem \ref{thm_main} follow from Theorem \ref{mourreap} by taking into account the conjugation by the unitary
transform $\I\U$.

For the existence and completeness of the wave operators, observe first that since $\J$ is unitary, these properties for $W_\pm(H,H_0;\J,I)$ are equivalent to
the same properties for $W_\pm(\J^* H\J,H_0;I)$. Then, by using again the unitary transform $\I\U$, one observes that this is still equivalent to
the existence and the completeness of
\begin{equation}\label{eq_local}
W_\pm(\I\U \J^* H\J \U^*\I^*,\I\U H_0 \U^*\I^*;I).
\end{equation}
Such properties will now be deduced from \cite[Theorem 7.4.3]{ABG96}. Indeed, according to
that statement, if the difference \eqref{eq_difference} belongs to $\B(\KK^{*\circ},\KK)$,
with $\KK:=\big(\D(A_I),L^2(\T^d;\C^n)\big)_{\frac{1}{2},1}$ and $\KK^{*\circ}$ the closure of $L^2(\T^d;\C^n)$ in $\KK^*$,
then the local wave operators \eqref{eq_local} exist and are complete.

In order to check this condition, recall that the operator $\Lambda$ satisfies $\D(\Lambda)\subset \D(A_I)$.
It then follows that $\LL:=\big(\D(\Lambda),L^2(\T^d;\C^n)\big)_{\frac{1}{2},1}\subset \big(\D(A_I),L^2(\T^d;\C^n)\big)_{\frac{1}{2},1}$, \
as shown for example in \cite[Corol.~2.6.3]{ABG96},
and then $\B(\LL^{*\circ},\LL)\subset \B(\KK^{*\circ},\KK)$.
However, we shall still consider the Fourier transform version of these spaces.
More precisely, let us set $\NNN:=\F^* \big(\D(\Lambda),L^2(\T^d;\C^n)\big)_{\frac{1}{2},1}$
which is equal to $\big(\D(\langle N\rangle),l^2(\Z^d;\C^n)\big)_{\frac{1}{2},1}$. Accordingly, one has to show that
\begin{equation}\label{eq_inclusion}
Z:=\F^*\I\U\big(\J^* H\J-H_0\big)\U^*\I^*\F \ \in \B(\NNN^{*\circ},\NNN).
\end{equation}

Fortunately, this term has already been computed (here or in \cite[Sec.~6]{PR}), and when acting on $\varphi=(\varphi_{1}, \ldots, \varphi_{n})$ with each $\varphi_{j} \in C_c(\Z^{d})$,
it is given by
\begin{equation}\label{nearly_the_end}
[Z\varphi]_j(\mu)
= \Big(\Big[\sum_{\ee\in A(\XX)}\Big([T(\ee)](N)\varphi- S_{\eta(\ee)}[K(\ee)](N)\Big) + r_s(N)-L_++L_-\Big]\varphi\Big)_j(\mu).
\end{equation}
Here the operators $[K(\ee)](N)$ and $[T(\ee)](N)$ are matrix valued multiplication operators defined by the functions
$K(\ee):\Z^d\to M_n(\C)$ and $T(\ee):\Z^d\to M_n(\C)$ with
\begin{equation*}
\[K(\ee)\](\mu)_{j\ell}:=\left\{\begin{matrix} \(\frac{m((\mu-\eta(\ee))\hat{\ee})}{m((\mu-\eta(\ee)) o(\hat{\ee}))^{\frac12}m((\mu-\eta(\ee)) t(\hat{\ee}))^{\frac12}}-\frac{m_0(\ee)}{m_0(o(\ee))^{\frac12}m_0(t(\ee))^{\frac12}}\) & \hbox{ if } o(\ee) = \xx_j,  t(\ee)=\xx_\ell \\
0 &  \hbox{otherwise}\end{matrix}\right.
\end{equation*}
and
\begin{equation*}
\[T(\ee)\](\mu)_{j \ell}:=\left\{\begin{matrix} \(\frac{m(\mu\hat{\ee})}{m(\mu o(\hat{\ee}))}-\frac{m_0(\ee)}{m_0(o(\ee))}\) & \hbox{ if } o(\ee)=\xx_j \hbox{ and } j=\ell \\
0 &  \hbox{otherwise} \end{matrix} \right.
\end{equation*}
with the convention that $m(\e):=m_0(\e)$ for any $\e \in F_-$.
Also, the multiplication operator $r_s(N)$ is defined by the function 
$r_s:\Z^d\to M_n(\C)$ with
\begin{equation*}
r_s(\mu)_{j\ell}:=\big(R(\mu x_j)-R_0(\mu x_j)\big)\delta_{j\ell}.
\end{equation*}
For any $\nu$ we have also used $S_\nu$ for the shift operator by 
$\nu$ acting on any $\varphi \in l^2(\Z^d;\C^n)$ as $[S_\nu \varphi](\mu):=\varphi(\mu+\nu)$.

In \cite[Sec.~6]{PR} it has been shown that an estimate of the form
\begin{equation}\label{condM}
\int_1^\infty \d \lambda  \sup_{\lambda<|\mu|<2\lambda }\lp M(\mu)\rp_{\B(\ell^2(\Z^d;\C^n))} <\infty,
\end{equation}
is satisfied for $M$ replaced by $K(\ee)$, $T(\ee)$ or by $r_s$.
Then, by an application of \cite[Lem.~6.3]{PR} one immediately deduces that
the operators  $K(\ee)$, $S_{\eta(\ee)}T(\ee)$, and $r_s$
belong to $\B(\NNN^{*\circ},\NNN)$, and so does the finite sum of them
appearing in \eqref{nearly_the_end}.
Note that the same argument applies for the second term appearing in the definition of $L_+$, see \eqref{eq_K_+}.
Indeed, by setting $M(\mu)_{jj}:=\deg_{F_+}(\mu x_j)$ and $M(\mu)_{j\ell}=0$ if $j\neq \ell$, then the assumption
\eqref{eq_Cm2} implies that the above condition \eqref{condM} is satisfied, 
and therefore \cite[Lem.~6.3]{PR} can also be applied.
Obviously, the same is true for the second term of $L_-$, since the sum is finite.

Unfortunately, the same approach does not hold for the first term in $L_+$
since this term is highly non-local. 
Note that the same is true for the first term in $L_-$, but since the sum in this term is finite, 
it can be easily treated (or treated like $L_+$).
For the first term $L_{+}^1$ in $L_+$ we shall impose a slightly stronger condition, 
namely we shall impose that 
$L_{+}^1\in \B(\GG^*,\GG)$, with $\GG:=\D(\langle N\rangle^s)$ for some $s>1/2$.
Then, since $\D(\langle N\rangle^s)\subset \big(\D(\langle N\rangle),l^2(\Z^d;\C^n)\big)_{\frac{1}{2},1}=:\NNN$, as shown in \cite[Prop.~2.4.1 \& 2.8.1]{ABG96} it follows that $L_{+}^1\in  \B(\NNN^{*\circ},\NNN)$, as required.
Thus, we are left with proving that $\langle N\rangle^s L_{+}^1 \langle N\rangle^s\in \B\big(\ell^2(\Z^d;\C^n)\big)$ for some $s>1/2$.

As in the proof of Proposition \ref{decay_of_F} one has
\begin{align}
\nonumber & \left\|\langle N\rangle^s L_{+}^1\langle N\rangle^s \varphi \right\|^{2}_{\ell^{2}(\Z^{d};\C^{n})} \\
\nonumber &= \sum_{\mu\in\Z^{d}}\sum_{j=1}^{n} \Big| \langle \mu\rangle^s 
\sum_{\e \in A(F_+)_{\mu x_{j}}} \frac{m(\e)}{m(\mu x_{j})^{1/2} m(t(\e))^{1/2}} \langle \lfloor t(\e) \rfloor \rangle^s \varphi_{\iota(t(\e))}(\lfloor t(\e) \rfloor)  \Big|^{2} \\
\nonumber &\leq C \sum_{\mu\in\Z^{d}}\sum_{j=1}^{n}  
\sum_{\e \in A(F_+)_{\mu x_{j}}} \langle \mu\rangle^{2s} \frac{m(\e)}{m(t(\e))} \langle \lfloor t(\e) \rfloor \rangle^{2s} \big|\varphi_{\iota(t(\e))}(\lfloor t(\e) \rfloor)\big|^2  \\
\nonumber &= C \sum_{\mu\in\Z^{d}}\sum_{j=1}^{n}  
\sum_{\e \in A(F_+)_{\mu x_{j}}} \langle \lfloor t(\e) \rfloor  \rangle^{2s} \frac{m(\e)}{m(\mu x_j)} \langle \mu \rangle^{2s} \big|\varphi_{j}(\lfloor \mu \rfloor)\big|^2,
\end{align}
where $C$ is a bound for the function $\deg_m$.
Thus, if we define $\vartheta_3: \Z^{d}\to M_{n}(\C)$ by 
\begin{equation*}
\vartheta_3(\mu)_{jj} :=
\sqrt{ \langle \mu \rangle^{2s} \sum_{\e \in A(F_+)_{\mu x_{j}}} \frac{m(\e)}{m(\mu x_j)} \langle \lfloor t(\e) \rfloor  \rangle^{2s}} 
\end{equation*}
and $\vartheta_3(\mu)_{j\ell}=0$ if $j\neq \ell$, then we obtain
\begin{equation*}
\left\|\langle N\rangle^s L_{+}^1\langle N\rangle^s \varphi \right\|^{2}_{\ell^{2}(\Z^{d};\C^{n})} 
\leq C \big\|\vartheta_3(N)\big\|_{\B(\ell^2(\Z^d;\C^n))}^2
\|\varphi\|_{\ell^2(\Z^d;\C^n)}^2.
\end{equation*}
Since \eqref{eq_Cm5} is precisely the condition about the boundedness of $\vartheta_3$, one directly
infers that $\langle N\rangle^s L_{+}^1 \langle N\rangle^s\in \B\big(\ell^2(\Z^d;\C^n)\big)$ for some $s>1/2$,
and as a consequence the inclusion in \eqref{eq_inclusion} holds.
\end{proof}

Let us now prove that the examples mentioned in Section \ref{sec_intro} satisfy
the conditions of Theorem \ref{thm_main}.
For some computations we shall use the norm $|\cdot|_{\infty}$ on $\Z^{d}$ defined by $|\mu|_{\infty} =\max_{1\leq j \leq d}|\mu_{j}|$.
The inequalities $|\mu|_{\infty}\leq |\mu|\leq \sqrt{d}|\mu|_{\infty}$ 
will also be used at several places.

\begin{proof}[Proof of Example \ref{ex_0_connect}]
Let us first check that $\deg_m$ is bounded. Indeed one has for any $x\in \Z^d$
\begin{equation*}
\deg_m(x) =
\sum_{\e\not \in A(F)_x} m(\e) +\sum_{\e\in A(F)_x}m(\e) 
\leq 2d+C \begin{cases} \sum_{y\neq 0} |y|^{\alpha} & \quad \text{if} \ x=0 \\
|x|^{\alpha} & \quad \text{if} \ x\neq 0.   \end{cases}
\end{equation*}
Let $\Z_{>0}$ be the set of positive integers, and define $S_{r}:=\{ y\in \Z^{d} \mid |y|_{\infty}=r \}$ for $r\in\Z_{>0}$. 
If we also define $B_{r}:=\{ y\in \Z^{d} \mid |y|_{\infty}\leq r \}$, 
then the cardinality $|S_{r}|$  of $S_r$ is estimated as 
\begin{equation*}
|S_{r}| = |B_{r}|-|B_{r-1}| 
= (2r+1)^{d} - (2r-1)^d 
\leq M r^{d-1}
\end{equation*}
for some $M$ independent of $r$. 
Taking this inequality into account, one infers that
\begin{equation*}
\sum_{y\neq 0} |y|^{\alpha} = \sum_{r\in \Z_{>0}}\sum_{y\in S_{r}} |y|^{\alpha}  
\leq \sum_{r\in \Z_{>0}}\sum_{y\in S_{r}} |y|_{\infty}^{\alpha} 
\leq M \sum_{r \in \Z_{>0}}r^{\alpha+d-1}
\end{equation*}
which is clearly bounded if $\alpha < -d$. 

Let us now check that conditions \eqref{eq_Cm2} and \eqref{eq_Cm3} are also satisfied.
Indeed, by considering $|x|\geq \lambda\geq 1$, we obtain for \eqref{eq_Cm2} 
\begin{equation*}
\int_{1}^{\infty} \d\lambda \sup_{\lambda < |x| < 2\lambda}\sum_{\e\in A(F)_{x}}m(\e)
\leq C \int_{1}^{\infty}  \lambda^{\alpha} \d\lambda, 
\end{equation*} 
which is finite for $\alpha < -1$. 
For \eqref{eq_Cm3}, observe that 
\begin{align*}
\sum_{\lambda\leq |y|\leq 2\lambda} |y|^{\alpha}
& \leq \sum_{\frac{\lambda}{\sqrt{d}} \leq |y|_{\infty} \leq 2\lambda} |y|_{\infty}^{\alpha} \\
&\leq M\sum_{\frac{\lambda}{\sqrt{d}} \leq r \leq 2\lambda} r^{\alpha+d-1} \\
&\leq M\Big( \int_{\frac{\lambda}{\sqrt{d}}}^{2\lambda} r^{\alpha+d-1} \d r + (\frac{\lambda}{\sqrt{d}})^{\alpha+d-1} \Big) \\
&= M \Big( \frac{1}{\alpha + d}(2\lambda)^{\alpha+d} - \frac{1}{\alpha + d}(\frac{\lambda}{\sqrt{d}})^{\alpha+d} + (\frac{\lambda}{\sqrt{d}})^{\alpha+d-1} \Big). 
\end{align*}
Then, we infer that
\begin{align*}
\int_{1}^{\infty} \d\lambda \sup_{x\in\Z}
\sqrt{\sum_{\substack{\e\in A(F)_{x} \\ \lambda \leq \lfloor |t(\e)| \rfloor \leq 2\lambda}}m(\e)}
&\leq \int_{1}^{\infty} \d\lambda \sqrt{C \sum_{\lambda\leq |y|\leq 2\lambda} |y|^{\alpha}} \\
&\leq \int_{1}^{\infty} \d\lambda \sqrt{ CM \Big( \frac{1}{\alpha + d}(2\lambda)^{\alpha+d} - \frac{1}{\alpha + d}(\frac{\lambda}{\sqrt{d}})^{\alpha+d} + (\frac{\lambda}{\sqrt{d}})^{\alpha+d-1} \Big)}. 
\end{align*}
The last term is finite if $\frac{\alpha+d}{2} < -1$, that is if $\alpha < -d-2$. 

Let us check condition \eqref{eq_Cm5}. 
We consider $s=\frac{3}{4}$, then we can easily show that condition \eqref{eq_Cm5} is satisfied if $\alpha<-d-\frac{3}{2}$. 
Therefore, under $\alpha<-d-2$, the local wave operators exist and are complete. 
\end{proof}

\begin{proof}[Proof of Example \ref{ex_all_connect}]
Let us first check that $\deg_m$ is bounded. Indeed one has for any $x\in \Z^{d}$
\begin{equation*}
\deg_m(x) =
\sum_{\e\not \in A(F)_x} m(\e) + \sum_{\e\in A(F)_x}m(\e) 
\leq 2d+ C (1+|x|)^{\alpha} \sum_{y\in\Z^{d}}(1+|y|)^{\alpha}. 
\end{equation*}
If $\alpha < -d$, the term $M':=\sum_{y\in\Z^{d}}(1+|y|)^{\alpha}$ is finite and $(1+|x|)^{\alpha}\leq 1$, which implies that $\deg_m$ is bounded.
With the same argument, one obtains that condition \eqref{eq_Cm2} is also satisfied. Indeed, one has
\begin{equation*}
\int_{1}^{\infty} \d\lambda \sup_{\lambda < |x| < 2\lambda}\sum_{\e\in A(F)_{x}}m(\e)
\leq \int_{1}^{\infty} \d\lambda \ CM'(1 + \lambda)^{\alpha}, 
\end{equation*} 
which is also finite if $\alpha < -1$. 
For condition \eqref{eq_Cm3}, observe that  
\begin{align*}
\sum_{\substack{\e\in A(F)_{x} \\ \lambda \leq \lfloor |t(\e)| \rfloor \leq 2\lambda}}m(\e)
& \leq C (1+|x|)^{\alpha} \sum_{\lambda \leq  |y| \leq 2\lambda} (1+|y|)^{\alpha} \\
& \leq C (1+|x|)^{\alpha} \sum_{\lambda \leq  |y| \leq 2\lambda} |y|^{\alpha} \\
&= CM (1+|x|)^{\alpha} \Big( \frac{1}{\alpha + d}(2\lambda)^{\alpha+d} - \frac{1}{\alpha + d}(\frac{\lambda}{\sqrt{d}})^{\alpha+d} + (\frac{\lambda}{\sqrt{d}})^{\alpha+d-1} \Big). 
\end{align*}
Therefore, we obtain 
\begin{equation*}
\int_{1}^{\infty} \d\lambda \sup_{x\in\Z}
\sqrt{\sum_{\substack{\e\in A(F)_{x} \\ \lambda \leq \lfloor |t(\e)| \rfloor \leq 2\lambda}}m(\e)}
\leq \int_{1}^{\infty} \d\lambda \sqrt{CM\Big( \frac{1}{\alpha + d}(2\lambda)^{\alpha+d} - \frac{1}{\alpha + d}(\frac{\lambda}{\sqrt{d}})^{\alpha+d} + (\frac{\lambda}{\sqrt{d}})^{\alpha+d-1} \Big)}.
\end{equation*}
The right hand side is finite if $\frac{\alpha+d}{2} < -1$, that is if $\alpha < -d-2$. 

Let us check the condition \eqref{eq_Cm5} for $s=\frac{3}{4}$. 
Taking $\langle x \rangle^{2s}(1+|x|)^{\alpha} \leq 1$ into account, 
we can show that the condition \eqref{eq_Cm5} is satisfied if $\alpha<-d-\frac{3}{2}$. 
Therefore, under $\alpha<-d-2$, the local wave operators exist and are complete. 
\end{proof}

\begin{proof}[Proof of Example \ref{ex_Toblerone}]
Let us first check that $\deg_m$ is bounded. Indeed, one has for any $x\in V(\X)$
\begin{equation*}
\deg_m(x) =
\sum_{\e\not \in A(F)_x} m(\e) + \sum_{\e\in A(F)_x}m(\e) 
\leq 4+ C \begin{cases} \sum_{y\in V(\X)}(1+|\lfloor y\rfloor|)^{\alpha} & \quad \text{if} \ x=x_{0} \\
(1+|\lfloor x\rfloor|)^{\alpha} & \quad \text{if} \ x\neq x_{0}.   \end{cases}
\end{equation*}
If $\alpha < -1$, $\sum_{y\in V(\X)}(1+|\lfloor y\rfloor|)^{\alpha} = 3\sum_{y\in \Z}(1+|y|)^{\alpha}$ and it is finite, which implies that $\deg_m$ is bounded.
Let us now check that conditions \eqref{eq_Cm2} and \eqref{eq_Cm3} are also satisfied.
Indeed, by considering $|\lfloor x\rfloor|\geq \lambda\geq 1$, we obtain for \eqref{eq_Cm2} 
\begin{equation*}
\int_{1}^{\infty} \d\lambda \sup_{\lambda < |x| < 2\lambda}\sum_{\e\in A(F)_{x}}m(\e)
\leq C \int_{1}^{\infty}  (1+\lambda)^{\alpha} \d\lambda, 
\end{equation*} 
which is finite for $\alpha < -1$. 
For condition \eqref{eq_Cm3}, observe that  
\begin{align*}
\sum_{\substack{\e\in A(F)_{x_{0}} \\ \lambda \leq |\lfloor t(\e) \rfloor| \leq 2\lambda}}m(\e)
& \leq C \sum_{\lambda \leq  |\lfloor y\rfloor| \leq 2\lambda} (1+|\lfloor y\rfloor|)^{\alpha} \\
& \leq 3C \sum_{\lambda \leq  |y| \leq 2\lambda} |y|^{\alpha} \\
&= 3C \Big( \frac{1}{\alpha + 1}(2\lambda)^{\alpha+1} - \frac{1}{\alpha + 1}\lambda^{\alpha+1} + \lambda^{\alpha} \Big). 
\end{align*}
Therefore, we obtain 
\begin{equation*}
\int_{1}^{\infty} \d\lambda \sup_{x\in\Z}
\sqrt{\sum_{\substack{\e\in A(F)_{x} \\ \lambda \leq \lfloor |t(\e)| \rfloor \leq 2\lambda}}m(\e)}
\leq \int_{1}^{\infty} \d\lambda \sqrt{3C \Big( \frac{1}{\alpha + 1}(2\lambda)^{\alpha+1} - \frac{1}{\alpha + 1}\lambda^{\alpha+1} + \lambda^{\alpha} \Big)}.
\end{equation*}
The right hand side is finite if $\frac{\alpha+1}{2} < -1$, that is if $\alpha < -3$. 

Let us check the condition \eqref{eq_Cm5}. 
If $s=\frac{3}{4}$, then we can show that the condition \eqref{eq_Cm5} is satisfied if $\alpha<-\frac{5}{2}$. 
Therefore, under $\alpha<-3$, the local wave operators exist and are complete. 
\end{proof}

\section*{Acknowledgements}

The authors thank D.~Parra for a critical reading of the manuscript, and
for interesting discussions about its content.

\end{document}